\documentclass[12pt]{article}
\usepackage{amsmath, amsthm, amssymb, amsfonts, appendix}
\usepackage{natbib}
\usepackage{setspace}
\usepackage{graphicx}
\usepackage{subfig}
\usepackage{placeins}

\usepackage[normalem]{ulem}
\usepackage{color}
\usepackage{xcolor}
\usepackage{afterpage}
\usepackage{float}
\usepackage{tikz}
\usepackage{extdash}
\usepackage[letterpaper, left=1.5in, right=1.5in, top=1.125in, bottom=1.125in, includefoot]{geometry}

\usepackage[T1]{fontenc}
\usepackage[scr = dutchcal]{mathalpha}

\usetikzlibrary{patterns}
\newtheorem{assumption}{Assumption}
\newtheorem{theorem}{Theorem}
\newtheorem*{theorem*}{Theorem}

\newtheorem{lemma}{Lemma}

\theoremstyle{definition}

\theoremstyle{definition}
\newenvironment{proof2}[1][Proof]{\noindent \textit{#1.} }{\  \hfill$\square$\bigskip}
\footnotesep 14 pt
 \interfootnotelinepenalty=10000

\newcommand \slashfrac[2]{{#1}/{#2}}
\newcommand{\Ind}{\, \rotatebox[origin=c]{90}{\ensuremath{\models}}\,}
\newcommand{\supp}{\text{supp}\,}


\newif \ifshowexplanations

\newif \ifshowflags

\showexplanationsfalse
\showflagsfalse

\begin{document}


\begin{titlepage}
\def \footnoterule{ } \title{Nonparametric Identification  of  Differentiated Products Demand Using Micro Data%
\thanks{Early versions of this work were presented in the working paper ``Nonparametric Identification of Multinomial Choice Demand Models with Heterogeneous Consumers,'' first circulated in 2007 and superseded by the present paper. We thank Jesse Shapiro, Suk Joon, Son and numerous seminar participants for helpful comments. Jaewon Lee provided capable research assistance.}  \bigskip
    \bigskip}

\setstretch{1}\author{{\large Steven T. Berry}\\
\textit{Yale University}\\
\and
{\large Philip A. Haile}  \\
\textit{Yale University}\\
\bigskip \bigskip}

\date{April 13, 2022} 
\maketitle
\thispagestyle{empty}

\begin{abstract}
  \noindent \setstretch{1}%
  {We examine identification of differentiated products
      demand when one has ``micro data'' linking individual consumers' characteristics and choices. Our model
      nests standard specifications featuring rich observed and unobserved consumer heterogeneity
      as well as product/market-level unobservables that introduce the
      problem of econometric endogeneity. Previous work establishes
      identification of such models using market-level
      data and instruments for all prices and quantities. Micro data provides a panel structure that facilitates richer
      demand specifications and reduces requirements on both the number
      and types of instrumental variables.
      We address identification of demand in the standard case in which
       non-price product characteristics are assumed exogenous, but also cover identification of demand elasticities and other key features when product characteristics are endogenous. We discuss implications of
      these results for applied work.}
\end{abstract}

\end{titlepage}

\pagenumbering{arabic}
\restoregeometry

\section{Introduction}

Systems of demand for differentiated products are central to many
questions in economics. In practice it is common to estimate demand using panel data on the characteristics and choices of many individual consumers within
each market. This is often referred to as ``micro
data,'' in contrast to another common case in which only market-level
outcomes are observed.  At an intuitive level, the panel structure of micro data seems to offer more information than
market-level data alone. But in what precise sense does micro data help? How significant are the advantages of micro data? What specific kinds of variation within and across markets are helpful, and how?

In this paper we consider nonparametric identification of demand, focusing on the particular benefits of micro data. We consider a nonparametric consumer-level demand model
that substantially generalizes parametric models following \cite*{BLP} that
are widely used in practice.  A key benefit of micro data is that unobservables at the level of the product$\times$market
remain fixed as consumer attributes and choices vary within
a given market. This clean ``within'' variation can be combined with  cross-market variation in choice characteristics, market characteristics, prices, and instruments for prices, to yield identification. Compared to settings with market-level data, this can allow both a more general model and substantially reduced demands on instrumental variables.

We focus exclusively on identification.
The celebrated ``credibility revolution'' in applied microeconomics has  redoubled attention to identification obtained
through quasi-experimental variation, such as that arising through instrumental variables, geographic boundaries, or repeated observations within a single economic unit. Identification of demand presents special challenges that are absent in  much of empirical economics (\cite{Berry-Haile-HBK}). Nonetheless, we show that these same types of variation allow  identification of demand systems exhibiting rich consumer heterogeneity and endogeneity. Of course, nonparametric identification results do not eliminate
concerns about the impact of parametric assumptions relied on in
practice. However, they address the important question of whether such
assumptions can be viewed properly as finite-sample approximations
rather than essential maintained hypotheses.  Formal identification
results can also clarify which maintained assumptions may be most
difficult to relax, reveal the essential sources of variation in
the data, offer assurance that robustness analysis is possible,
and potentially lead to new (parametric or nonparametric) estimation
approaches.

Our results also provide insights that can inform applied
practice. Our most important message is that micro data has a high marginal value. Availability of instrumental variables is the most important and challenging requirement for identification of demand, and micro data can substantially reduce both the number and types of instruments needed. With market-level data, nonparametric identification typically requires instruments for all quantities and prices (\cite{Berry-Haile-market,Berry-Haile-annreview,Berry-Haile-HBK}). %
With micro data, we find that the only essential instruments are
those for prices. This cuts the number of required instruments in half and avoids the necessary reliance on so-called ``BLP instruments'' (characteristics of competing products). The availability of micro data also
opens the possible use of classes of instruments that
are often unavailable in the case of market-level data.

Another important finding for applied work concerns the need for cross-market variation: micro data from a single market does not suffice for identification in our model. We discuss the distinct roles of within-market and cross-market variation, including why the latter is needed. An implication is that existing studies using consumer-level data
within just one market may be relying on
functional form restrictions that can escape notice as assumptions of necessity rather than
convenience. These restrictions may include strong
assumptions on how observed consumer
attributes change demand.

We also show that it is sometimes possible to identify the \textit{ceteris
 paribus} effects of prices on quantities demanded (critically, e.g., own- and cross-price demand elasticities) even when observed
non-price product characteristics are endogenous and not instrumented.\footnote{This is related to well-known results regarding endogenous controls in  regression models.} This requires that
instruments for prices remain valid when conditioning on the
endogenous observed product characteristics, and we
illustrate through simple causal graphs how different cases do or do
not satisfy this requirement.  Potentially endogenous product
characteristics are an important concern in the applied literature on
differentiated products, and it can be difficult to find instruments for all such characteristics. Thus, our findings expand the range
of cases in which features of demand of primary interest can be identified despite these
concerns.

Several key aspects of our model and setting are worth emphasizing to clarify our contributions. First, as in the large empirical literature
building on \cite*{Berry94} and \cite*{BLP,MicroBLP}, we emphasize the
role of latent market-level demand shocks (unobservables at the level of the product$\times$market) that result in the
econometric endogeneity of prices. Explicitly accounting for these demand shocks is essential to the identification of policy-relevant features such as demand elasticities and equilibrium counterfactuals.\footnote{See section 2 of \cite{Berry-Haile-HBK} for more extensive discussion of this sometimes underappreciated point.} This drives our focus on market-level
endogeneity and differentiates our work from much of the prior research
on identification of demand or discrete choice models.\footnote{This includes prior work on identification (with micro data) of discrete choice models allowing endogeneity but specifying only a composite ``error'' term for each choice, representing all unobservables (and their interactions with observables) at the level of the product, consumer, or market.  See, e.g., \cite{Lewbel00}.}

Second, the panel structure of consumers-within-markets is essential to the questions we ask. It is what distinguishes micro data from market-level data, and the combination of within- and cross-market variation is essential to the reduction in instrumental variables needs discussed above. This panel structure also allows us to avoid any restriction on the way product-level observables enter demand. These features contrast with those of Berry and Haile (2014), who previously considered identification from market-level data in nonparametric models generalizing that of \cite*{BLP}. In that case, the demand system (especially if combined with supply) closely resembles a nonparametric simultaneous equations model, as studied by, e.g., \cite*{Matzkin_simult_ema,Matzkin_simult_estimate}, \cite*{Benkard-Berry},
\cite*{Blundell-Kristensen-MatzkinP&P,Blundell-Kristensen-Matzkin-individual}, and
\cite*{Berry-Haile-simulteqn}. However, the panel structure essential to the present paper is absent in that prior work.

Third, our model avoids requiring consumer-level observables that can be linked exclusively to the desirability of specific products. Such a requirement (often in combination with large support conditions) is widely used in ``special regressor'' approaches to identification of consumer-level discrete choice models,%
\footnote{%
  See the review by \cite*{Lewbel-special} and references therein. A very early version of this paper, (\cite*{BerryHaileNPold}), featured an example of this
  sort of identification approach in the panel context considered here. One common idea for an exclusive consumer-choice interaction is the distance to each of $J$ choices in geographic space. But even these are inherently restricted to lie on a 2-dimensional surface in $\mathbb{R}_+^J$, since the underlying consumer heterogeneity reflects only consumer locations.} %
  but is often difficult to motivate in practice. 
  More natural are situations in which multiple consumer-level observables interact to alter tastes for  all goods.  As a simple example---one illustrating a broader interpretation of ``demand''---consider a discrete choice model of expressive voting in a two-party (``R'' vs. ``D'') election, applied to survey data matching individual reported 
votes to voter sociodemographics.\footnote{Advertising, rather than price, often plays the role of the endogenous choice
  characteristic---one whose effects are sometimes of primary interest. See, e.g., \cite*{Gerber98} and \cite*{Gordon_voting}.} 
   Although voter-specific measures like age, income, gender, race, and education may provide rich variation in preferences between the two parties (and the outside option to abstain), no such measure is naturally associated exclusively with the attractiveness of a single option.

Fourth, although we initially emphasize discrete choice demand, as in the large applied literature following \cite*{BLP,MicroBLP}, this is not essential. The  primitive feature of interest in our analysis is an ``invertible'' demand function mapping observables (at the level of market, products, and consumer) and  demand shocks (at the level of product$\times$market) to expected quantities demanded.  This can allow continuous demand as well as departures from common assumptions regarding consumers' full information or  rationality.

Of course, our results do require some structure, including conditions on sources of variation. In addition to instruments (for prices) satisfying standard conditions, we rely on three important assumptions. One is a nonparametric
index restriction on the way 
market-level demand shocks and some observed consumer attributes enter the
model.
The second is injectivity of the mappings that link observed consumer attributes to
choice probabilities.\footnote{In contrast to related conditions  in \cite{Berry-Haile-market,Berry-Haile-simulteqn} or \cite{Matzkin_simult_ema}, here each index depends on observed
consumer attributes rather than observed product characteristics (which are fixed within markets), and there is no requirement that these observables be exogenous.} Below we connect these requirements to the literature through some familiar examples.

Finally, we require sufficient variation in the consumer observables
to satisfy a ``common choice probability'' condition that we believe
is new to the literature.
Given values of the non-price observables at the product or market level, this 
condition requires that there exist some point $s^*$ (unknown \textit{a priori}) in the probability
simplex such that in every market one can obtain $s^*$ as the
conditional choice probability vector by conditioning on the ``right''
set of consumer observables for that market. This requires that the number of observed consumer
attributes be at least as large as the number of products, and that they have sufficient independent variation.
However, this requirement contrasts with a standard ``large
support'' condition, which would imply that
\textit{every} point $s$ within the simplex is a common choice
probability. We require only one such point (which may differ with the non-price product/market observables) and allow a broad range of
cases where choice probabilities are never close to one or zero.\footnote{
In the voting example, our condition would require a vote share vector---say 0.4 for R and 0.4 for D (the remainder abstaining)---such that in every market (e.g., metro area) with the same pair of candidates and equal values of any metro-level observables, there is a
  combination of individual-level sociodemographic measures that generates this conditional
  vote share. The level of education etc.\ required to match the given vote
  share might be higher (and perhaps income lower, etc.) in an unobservably conservative market.} An attractive feature of the
common choice probability condition is that is verifiable; i.e., its
satisfaction or failure is identified.

Our results are relevant to a large empirical literature exploiting micro data to estimate demand.  A classic example is McFadden's
  study of transportation demand (\cite*{McFadden-Trans}), where each
  consumer's preferences over different modes of transport are
  affected by her available mode-specific commute times and other factors. This
  example illustrates the defining characteristic of the type of micro
  data considered here: consumer-specific observables that alter the
  relative attractiveness of different options. Consumer distances to
  different options have been used in a number of  applications, including those involving demand for hospitals, retail
  outlets, residential locations, or schools, as in the examples of
  \cite*{CappsDranoveSatterthwaite}, \cite*{BurdaHardingHausman},
  \cite*{Bayer_Keohane_Timmons}, and \cite*{neilson19}. More broadly,
  observable consumer-level attributes that shift tastes for products
  might include a household's income, sociodemographic measures, or
  other proxies for idiosyncratic preferences. For example, income and
  family size have been modeled as shifting preferences for cars
  (\cite*{Goldberg95}, \cite*{Petrin02}); race, education, and birth
  state have been modeled as shifting preferences for residential
  location (\cite*{Diamond_JMP}). Other examples include product-specific advertising exposure
  (\cite*{Ackerberg03}), consumer-newspaper ideological match
  (\cite*{GentzkowShapiro}), and the match between household
  demographics and those of a school or neighborhood
  (\cite*{Bayer_et_al07}, \cite*{HomJMP}).\footnote{%
    Here we cite only a small representative handful of papers out of
    a selection that spans many topics and many years. See also the
    examples in section \ref{sec examples}.} An important feature of
  these examples, reflected by our model, is that the typical
  consumer-level observable cannot be tied exclusively to a single
  good.

In what follows, section \ref{sec model} sets up our model of multinomial choice demand. Section \ref{sec examples} connects this model to parametric
examples from the empirical literature. We 
present our identification results in section \ref{sec identification}. In section \ref{sec discuss} we discuss potential instruments, variations, and extensions,  including models of continuous demand and additional structure that would allow identification when the available micro data or instruments have lower dimension. We discuss other key implications for applied work in section \ref{sec lessons} and conclude in section \ref{sec conclusion}. An appendix provides further discussion of instruments, using simple causal graphs.

\section{Model and Features of Interest}\label{sec model}

\subsection{Setup}

We consider multinomial choice among $J$ goods (or ``products'') and an outside
option (``good 0'') by consumers $i$ in ``markets'' $t$. A market is defined formally by:%
    \footnote{In practice, markets are typically defined by time period or geography. We condition on a fixed number of goods without loss. With additional assumptions,  variation in the number of goods available can be valuable and data from markets with $J$ available goods could be used to predict outcomes in markets with more or fewer goods.}%

\begin{itemize}

\item a price vector $P_{t}=\left( P_{1t},\dots ,P_{Jt}\right)$;

\item a set  of additional observables $X_{t}$;

\item a vector $\Xi _{t}=\left( \Xi _{1t},\dots ,\Xi _{Jt}\right) $ of
unobservables;%
    \footnote{For clarity we write random variables in uppercase and their realizations lowercase. Note that $\Xi $ is the uppercase form of the standard notation $\xi $ for unobservables at the product$\times$market level.}

\item a distribution $F_{YZ}\left( \cdot ;t\right) $ of consumer
observables $(Y_{it},Z_{it}) \in \mathbb{R}^H\times \mathbb{R}^J$, $H\geq 0$, with support $\Omega\left( X_{t}\right) $;

\end{itemize}

The variables $P_t, X_t$, and $\Xi_t$ are common to all consumers in a given market. We distinguish between $P_t$ and $X_t$  due to the particular interest in how demand responds to prices and the typical focus on endogeneity of prices. However, we have not yet made the standard assumption that $X_t$ is exogenous---e.g., independent or mean independent of the demand shocks $\Xi_t$. In fact, we will see below that identification of demand elasticities and other key features of demand can often be obtained without requiring such an assumption (or additional instruments for $X_t$).\footnote{Alternatively, when instruments are available for endogenous components of $X_t$, our results generalize immediately by expanding $P_t$ to include these endogenous characteristics.}

   Although $X_{t}$ will typically include observable product characteristics, it may also include other factors defining markets. For example,  consumers might be partitioned into ``markets'' based on a combination of geography, time, product availability, and average demographics included in $X_{t}$. In contrast, observables varying across consumers within a market are represented by $Y_{it}$ and $Z_{it}$.
 We make a distinction between $Y_{it}$ and $Z_{it}$ in order to isolate our requirements on consumer-level data. Key conditions, made precise below, are that consumer observables have dimension of at least $J$ (hence, $Z_{it}\in \mathbb{R}^J$) and that changes in $Z_{it}$ alter the relative attractiveness of different goods. We do not require the additional consumer observables $Y_{it}$; however, we can accommodate them in an unrestricted way, and conditioning on an appropriate value of $Y_{it}$ can weaken some assumptions. 
 
Although our requirements on $Z_{it}$ permit the case in which each component $Z_{ijt}$ exclusively affects the attractiveness of good $j$, we will not require this. Nor will we require independence (full, conditional, or mean independence) between $(Y_{it},Z_{it})$ and  $\Xi_t$.

The choice environment of consumer $i$ in market $t$ is then represented by
\begin{equation*}
C_{it}=\left( Z_{it},Y_{it},  P_{t},X_{t},\Xi _{t}\right).
\end{equation*}
Let $\mathcal{C}$ denote the support of $C_{it}$. The most basic
primitive characterizing consumer behavior in this setting is a
distribution of decision rules for each $c_{it}\in\mathcal{C}$.%
\footnote{Under additional conditions a distribution
  of decision rules can be represented as the result of utility
  maximization. See, e.g., \cite*{Mas-Colell_Whinston_Green},
  \cite*{BlockMarschak}, \cite*{Falmagne78}, and
  \cite*{McFadden-revealed}. We will not require such conditions 
  or consider a utility-based representation of choice
  behavior. A related issue is identification of welfare effects.
  Standard results allow construction of valid measures of aggregate
  welfare changes from a known demand system in the absence of income
  effects. \cite*{Bhattacharya-WF2} provides such results  for discrete choice settings when
  income effects are present but suggests
  use of control function methods for identification/estimation of
  demand. As discussed in \cite*{Berry-Haile-HBK}, control function  methods are valid only
  under strong functional form restrictions (\cite*{BlundellMatzkin14}), which are violated even
  in standard parametric specifications of demand for differentiated products.}  As usual, heterogeneity in
decision rules (i.e., nondegeneracy of the distribution) within a
given choice environment may reflect a variety of factors, including latent preference heterogeneity
across consumers, shocks to individual preferences, latent variation in consideration sets, or stochastic
elements of choice (e.g., optimization error).\footnote{\label{fn interp}For many purposes, one need not take stand on the interpretation of this randomness, since the economic questions of interest  involve changes to the arguments of demand functions, not to the functions themselves. This covers the canonical motivation for demand estimation: quantifying responses to \textit{ceteris paribus} price changes. However, for some questions---e.g., those involving information interventions or requiring identification of cardinal utilities---the interpretation becomes important.  See \cite*{BCMT-consideration} for a recent contribution on this topic.}

\subsection{Demand and Conditional Demand}

The choice made by consumer $i$ is represented by $Q_{it}=\left( Q_{i1t},\dots ,Q_{iJt}\right)$, where $Q_{ijt}$ denotes the quantity (here, 0 or 1) of good $j$ purchased. Given $C_{it}$, a distribution of decision rules is characterized by the conditional cumulative joint distribution function $F_{Q}\left( q|C_{it}\right) =E\left[ 1\left\{ Q_{it}\leq q\right\} |C_{it}\right]$. In the case of discrete choice, this distribution can be represented without loss by the structural choice probabilities
\begin{equation}\label{eq expected demand}
\mathscr{s} \left( C_{it}\right) = \left( \mathscr{s}_{1}(C_{it}),\dots ,\mathscr{s}_{J}(C_{it})\right) =E\left[ Q_{it}|C_{it}\right].
\end{equation}%
Given the measure of consumers in each choice environment, the mapping $\mathscr{s}$ fully characterizes consumer demand. We therefore consider identification of the demand mapping $\mathscr{s}$ on $\mathcal{C}$.

However, it is useful to also consider identification of the conditional
demand functions%
\[
\bar{\mathscr{s}}\left( Z_{it},Y_{it},P_{t};t\right) =\mathscr{s}\left(
Z_{it},Y_{it},P_{t},x_{t},\xi _{t}\right)
\]%
on
$$\mathcal{C}(x_t,\xi_t)=\supp \left( Z_{it},Y_{it},  P_{t}\right)\vert \left(X_t=x_t,\Xi_t=\xi_t \right)$$
for each market $t$. The function $\bar{\mathscr{s}}\left(Z_{it},Y_{it},P_{t};t\right) $ is simply the demand function $\mathscr{s}$ when $\left( X_{t},\Xi _{t}\right) $ are fixed at the values $\left(
x_{t},\xi _{t}\right) $ realized in market $t$.
Because $\Xi_t$ is unobserved and prices are fixed within each market, identification of $\bar{\mathscr{s}}\left(Z_{it},Y_{it},P_{t};t\right) $ is nontrivial.
However, this mapping fully characterizes the responses of demand (at all combinations of $\left( Z_{it},Y_{it}\right) $) to counterfactual \textit{ceteris paribus}
price variation, holding $X_t$ and $\Xi_t$ fixed at their realized values in market $t$. Thus, knowledge $\bar{\mathscr{s}}(\cdot;t)$ for each market $t$
suffices for many purposes motivating demand estimation in
practice.

Notably, $\bar{\mathscr{s}}(\cdot;t)$ fully determines the own- and
cross-price demand elasticities for all goods in market $t$. One implication
is that  $\bar{\mathscr{s}}(\cdot;t)$ is the feature of $\mathscr{s}$ needed to
discriminate between alternative models of firm competition (e.g., \cite*{Berry-Haile-market},
\cite*{Backus-Conlon-Sinkinson}, \cite*{Duarte-etal-testing}). And,
given an assumed model of supply,  $\bar{\mathscr{s}}(\cdot;t)$ suffices to identify
firm markups and marginal costs, following \cite*{BLP} and \cite*{Berry-Haile-market}; to decompose the
sources of firms' market power, as in \cite*{Nevo2001}; to determine equilibrium outcomes under
a counterfactual tax, tariff, subsidy, or exchange rate (e.g., \cite*{Anderson-Depalma-Kreider}, \cite*{Nakamura-Zerom}, \cite*{Decarolis-etal-subsidy}); or to determine the
equilibrium \textquotedblleft unilateral effects\textquotedblright\ of a
merger (e.g., \cite*{Nevo2000}, \cite*{MillerSheuMerger}).
Furthermore, $\bar{\mathscr{s}}(\cdot;t)$ alone determines the ``diversion ratios'' (e.g., \cite*{Conlon-Mortimer-diversion}) that often play a central role in the
practice of antitrust merger review.

Of course, because the functions  $\bar{\mathscr{s}}(\cdot;t)$  are defined with fixed
values of $\left( X_{t},\Xi _{t}\right) $, they do not suffice for answering
all questions---in particular, those requiring knowledge of \textit{ceteris
paribus} effects of $X_t$ on demand.\footnote{In some cases, such effects may be of direct interest---e.g., to infer
willingness to pay for certain product features. In other cases, such
effects are inputs to determination of demand under counterfactual product
offerings or entry. Thus, while knowledge of $\bar{\mathscr{s}}(\cdot;t)$  in all
markets suffices in a large fraction of applications, knowledge of $%
\mathscr{s}$ is required for others.} However, by avoiding the need to
separate the effects of $X_{t}$ and $\Xi _{t}$ on demand, identification of  $\bar{\mathscr{s}}(\cdot;t)$  in each market $t$ can often be
obtained without requiring exogeneity of $X_{t}$. This can be important when
exogeneity is in doubt and one lacks the additional instruments that
would allow treating endogenous elements of $X_{t}$ as we treat
 prices $P_{t}$ below.
 
 \subsection{Core Assumptions}

So far we have implicitly made two significant assumptions: (i) unobservables at the market level can be represented by a $J$-vector $\Xi_t$; (ii) conditional on  $X_t$, the support of $(Y_{it},Z_{it})$ is the same in all markets. The first is standard but important. The second seems mild for many applications and can be relaxed at the cost of more cumbersome exposition.
Our results will also rely on the following key structure.

\begin{assumption}[Index]\label{ass index}
$\mathscr{s} \left( C_{it}\right) =\sigma \left( \gamma \left(Z_{it},Y_{it},X_{t},\Xi _{t}\right) ,Y_{it},P_{t}, X_{t}\right) $, with \linebreak$\gamma \left(Z_{it},Y_{it},X_{t},\Xi_{t}\right)=\left(\gamma_1 \left(Z_{it},Y_{it},X_{t},\Xi_{t}\right),\dots, \gamma_J \left(Z_{it},Y_{it},X_{t},\Xi_{t}\right)\right) \in \mathbb{R}^{J}.$
\end{assumption}

\begin{assumption}[Invertible Demand]\label{ass invertible demand}
$\sigma \left( \cdot ,Y_{it},P_{t},X_{t}\right) $ is injective on the support of $ \gamma(Z_{it},Y_{it}, X_t,\Xi_t)|(Y_{it},P_t,X_t)$.
\end{assumption}

\begin{assumption}[Injective Index]\label{ass injective index}
$\gamma \left( \cdot ,Y_{it}, X_{t},\Xi _{t}\right) $ is injective on the support of $Z_{it}|(Y_{it},X_t)$.
\end{assumption}

\begin{assumption}[Separable Index]\label{ass linear index}
$\gamma_j \left(Z_{it},Y_{it}, X_t,\Xi _{t}\right) = \Gamma_j\left(
Z_{it},Y_{it},X_t\right) + \Xi_{jt}$ for all $j$.
\end{assumption}

Assumption \ref{ass index} requires that, given $(Y_{it},P_t,X_t)$,
$Z_{it}$ and $\Xi _{t}$ affect choices only through indices
$\left( \gamma _{1}\left( Z_{it},Y_{it},X_{t},\Xi _{t}\right)
  ,\dots,\gamma _{J}\left( Z_{it},Y_{it},X_{t},\Xi _{t}\right) \right)
$ that exclude $P_t$. This is a type of weak separability
assumption. Observe that $X_t$ and $Y_{it}$ can affect demand both
directly and through the indices, and that the indices themselves
enter the function $\sigma$ in fully flexible form. As we illustrate
below, this index structure generalizes standard specifications used
in practice. Assumption \ref{ass invertible demand} further requires
that the choice probability function $\sigma$ be ``invertible'' with
respect to the index vector---that, holding $(Y_{it},P_{t},X_{t})$
fixed, distinct index vectors map to distinct choice probabilities.
This is not without loss, and in general injectivity requires that
$\sigma$ map to interior values, i.e., that $\sigma_j(C_{it})>0$ for
all $j$ and $C_{it}\in \mathcal{C}$.  \cite*{BerryGandhiHaile} provide
sufficient conditions for invertibility and point out that these are
natural in discrete choice settings when each
$\gamma_{j}\left( Z_{it},Y_{it},X_{t}, \Xi _{t}\right) $ can be
interpreted as a (here, consumer-specific) quality index for good $j$.
Assumption \ref{ass injective index} requires injectivity of the index
function $\gamma $ with respect to the vector $Z_{it}$.  This
generalizes common utility-based specifications in which each
$Z_{ijt}$ is assumed to affect only the conditional indirect utility
of good $j$ and to do so monotonically. For example, if each index
were a linear function of the $J$ components of $Z_{it}$, Assumption
\ref{ass injective index} would require the matrix of coefficients to
be full rank.\footnote{General sufficient conditions for injectivity
  can be found in, e.g., \cite*{Palais59}, \cite*{Gale-Nikaido},
  \cite*{Parthasarathy83}, and \cite*{BerryGandhiHaile}.} Assumption
\ref{ass linear index} requires the indices
$ \gamma_{j}\left( Z_{it},Y_{it},X_{t}, \Xi _{t}\right) $ to take an
additively separable structure.\footnote{One key role of additive
  separability here is to allow use of standard IV conditions---those
  needed for identification of separable nonparametric regression
  models---where instruments are required. In the context of
  market-level data, \cite*{Berry-Haile-market} include results that
  allow relaxation of additive separability by strengthening IV
  conditions or other assumptions. See also
  \cite*{Matzkin_simult_estimate} and
  \cite*{Blundell-Kristensen-Matzkin-individual}. None of these covers
  the panel structure (consumers within markets) of the micro data
  setting we consider here.}

\subsection{A Useful Representation of the Index}

Although the formulation of our index vector $\gamma \left(Z_{it},Y_{it},X_{t},\Xi_{t}\right)$ above maximizes clarity about our core assumptions, for the study of identification it will be convenient to define
\begin{equation*}
g\left( Z_{it},Y_{it},X_{t}\right) =\Gamma\left( Z_{it},Y_{it},X_{t}\right) +E
\left[ \Xi _{t}|X_{t}\right]
\end{equation*}
and
\begin{equation}\label{eq define h}
 h\left( X_{t},\Xi _{t}\right) =\Xi _{t}-E\left[ \Xi _{t}|X_{t}\right] ,
\end{equation}%
so that
\begin{equation}\label{eq new index}
 \gamma \left( Z_{it},Y_{it},X_{t},\Xi _{t}\right) =g\left(
 Z_{it},Y_{it},X_{t}\right) +h\left( X_{t},\Xi _{t}\right) .
\end{equation}%
Observe that
\begin{equation}\label{eq location h}
 E\left[ h\left( X_{t},\Xi _{t}\right) |X_{t}\right] =0
\end{equation}%
by construction. 

With this notation, we have
\begin{equation}\label{eq demand}
\mathscr{s}\left( Z_{it},Y_{it},P_{t},X_{t},\Xi _{t}\right) =\sigma \left( g\left(
Z_{it},Y_{it},X_{t}\right) +h\left( X_{t},\Xi _{t}\right)
,Y_{it},P_{t},X_{t}\right)
\end{equation}
and
\begin{equation}\label{eq cond demand}
\bar{\mathscr{s}}\left( Z_{it},Y_{it},P_{t};t\right) =\sigma \left( g\left(
Z_{it},Y_{it},x_{t}\right) +h(x_{t},\xi _{t}),Y_{it},P_{t},x_{t}\right)
\end{equation}
We henceforth work with this representation of the demand and conditional demand functions.

\subsection{Technical Conditions}

Let $\mathcal{X}$ denote the support of $X_t$.  For $x\in \mathcal{X}$, let
 $\mathcal{Y}(x)$ denote
 the support of $Y_{it}|\{X_t=x\}$ and, for $y\in\mathcal{Y}(x)$,  let $\mathcal{Z}(y,x)\subset \mathbb{R}^J$ denote the support of
 $Z_{it}|\{Y_{it}=y,X_t=x\}$. In parts (i)--(iii) of Assumption \ref{ass smoothness} we assume conditions permitting our applications of
 calculus and continuity arguments below. 
 Part (iv)  strengthens the
 injectivity requirements of Assumptions \ref{ass invertible demand}
 and \ref{ass injective index} slightly by requiring that the Jacobian matrices
 $\slashfrac{\partial g(z,y,x)}{\partial z}$ and
 $\slashfrac{\partial \sigma(\gamma,y,p,x)}{\partial \gamma}$ be
 nonsingular almost surely.%
\footnote{\label{fn invariance of domain}Although we state Assumption
   \ref{ass smoothness} with the quantifier ``for all
   $y\in\mathcal{Y}(x)$,'' our arguments require these properties only at
   the arbitrary point $y^0(x)$ selected below for each $x\in \mathcal{X}$. 
   Given parts (i) and (ii), the injectivity of
   $g(\cdot,Y_{it},X_{t})$ required by Assumption \ref{ass injective
     index} implies (by invariance of domain) that the image
   $g\left(\mathcal{O},y,x\right) $ of any open set
   $\mathcal{O}\subseteq\mathcal{Z}(y,x)$ is open.
      An implication is that even without part (iv) there could be no nonempty open set
   $\mathcal{O}\in \mathcal Z(y,x)$ on which
   $\slashfrac{\partial g(z,y,x)}{\partial z}$ was singular, as 
   $g(\mathcal{O},y,x)$ would
   then be a nonempty open subset of $\mathbb{R}^J$, contradicting
   Sard's theorem. A similar observation applies to
   $\slashfrac{\partial \sigma(\gamma,y,p,x)}{\partial \gamma}$.
   }%

\begin{assumption}[Technical Conditions]\label{ass smoothness}
 For all $x\in \mathcal{X}$ and $y\in \mathcal{Y}(x)$,
\newline (i) $\mathcal{Z}(y,x)$ is open and connected;
\newline(ii) $g(z,y,x)$ is uniformly continuous in $z$ on $\mathcal{Z}(y,x)$ and continuously differentiable with respect to $z$ on $\mathcal{Z}(y,x)$;
\newline(iii) $\sigma \left( \gamma,y, p, x \right) $ is continuously differentiable with respect to $\gamma $ for all $\left( \gamma, p \right) \in \supp\left( \gamma \left(Z_{it},Y_{it},X_t,\Xi _{t}\right),P_{t}\right)\vert \{Y_{it}=y,X_t=x\}$;%
and
 \newline(iv) $\slashfrac{\partial g(z,y,x)}{\partial z}$ and  $\slashfrac{\partial \sigma(\gamma,y,p,x)}{\partial \gamma}$ are nonsingular almost surely on $\mathcal{Z}(y,x)$ and $\supp (\gamma(Z_{it},Y_{it},X_t,\Xi_t),Y_{it},P_t,X_t)\vert  \{Y_{it}=y,X_t=x\}$, respectively.
\end{assumption}

\subsection{Normalization \label{subsec normal}}

The model requires two types of normalizations before the
identification question can be properly posed. The first reflects the
fact that the latent demand shocks have no natural location. Thus, we
set $E[\Xi_t]=0$ without loss.  The second reflects the fact that any
injective transformation of the index vector
$\gamma \left(Z_{it},Y_{it},X_t,\Xi _{t}\right) $ can be reversed by
appropriate modification of the function $\sigma$.  For example, take
arbitrary $A(X_t):\mathcal{X}\to\mathbb{R}^J$ and
$B(X_t):\mathcal{X}\to \mathbb{R}^{J\times J}$ ($B(x)$
invertible at all $x$). By letting { \footnotesize
\begin{align*}
\tilde{\gamma}\left( Z_{it},Y_{it},X_t,\Xi_t \right)  &=A(X_t)+B(X_t)\gamma \left( Z_{it},Y_{it},X_t,\Xi_t
\right)  \\
\tilde{\sigma}\left( \tilde{\gamma}\left( Z_{it},Y_{it},X_t,\Xi_t \right) ,Y_{it},P_t, X_t\right)
&=\sigma \left( B(X_t)^{-1}\left( \tilde{\gamma}\left( Z_{it},Y_{it},X_t,\Xi_t \right)
-A(X_t)\right), Y_{it},P_t,X_t\right)
\end{align*}%
}%
one obtains a new representation of the same distribution of
decision rules (and thus same demand), the new one satisfying our assumptions whenever the original does. We must choose a single representation of demand before exploring whether the observables allow identification.\footnote{Like location and scale normalizations of utility functions, our normalizations place no restriction on the demand function $\mathscr{s}$ or the conditional demand functions $\bar{\mathscr{s}}(\cdot;t)$. \label{fn ambiguity}However, our example illustrates an inherent ambiguity in the interpretation of how\ a given variable alters preferences. For example, in terms of consumer behavior (e.g., demand), there is no difference between a change in $Z_{ijt}$ (all else fixed) that makes good $j$ more desirable and a change in $Z_{ijt}$ that makes all other goods (including the outside good) less desirable. In practice, this ambiguity is often resolved with \textit{a~priori} exclusion assumptions---e.g., an assumption that $Z_{ijt}$ affects only the utility obtained from good $j$.  Such assumptions could only aid identification. See, for example, section \ref{sec stronger index} below.}

To do this, for each $x$ we take an arbitrary
$\left( z^{0}(x),y^{0}(x)\right) $ from the support of
$\left( Z_{it},Y_{it}\right) |\left\{ X_{t}=x\right\} $.  We then
select the representation of demand in which
\begin{equation}
E\left[ \gamma \left( z^{0}\left( X_t\right) ,y^{0}\left( X_t\right) ,X_t,\Xi_t
\right) |X_t=x\right] =0\text{ }\forall x  \label{eq location gamma}
\end{equation}%
and
\begin{equation}\label{eq normalize g matrix}
\left[ \frac{\partial g\left( z^{0}\left( x\right) ,y^{0}\left( x\right)
,x\right) }{\partial z}\right] = I\text{ }\forall x,
\end{equation}
where $I$ denotes the $J$-dimensional identity matrix. Observe that (\ref{eq new index}), (\ref{eq location h}), and (\ref%
{eq location gamma}) together imply%
\begin{equation}\label{eq location g}
g\left( z^{0}\left( x\right) ,y^{0}\left( x\right) ,x\right) =0 \quad \forall x.
\end{equation}

In the example above this choice of normalization is equivalent to taking
\[
B\left( x\right) =\left[ \frac{\partial g\left( z^{0}\left( x\right)
,y^{0}\left( x\right) ,x\right) }{\partial z}\right] ^{-1}
\]%
and
\[
A(x)=-B(x)g\left(z^0(x),y^0(x), x \right)
\]%
at each $x$, then dropping the tildes from the transformed model.

\section{Parametric Examples from the Literature}\label{sec examples}

The empirical literature includes many examples of parametric
specifications that are special cases of our model. Discrete choice demand models
are frequently formulated using a random coefficients random utility specification such as
\begin{equation}
u_{ijt}=x_{jt}\beta _{ijt}-\alpha_{it} p_{jt} +\xi _{jt}+\epsilon _{ijt},  \label{eq RCRUM}
\end{equation}%
where $u_{ijt}$ represents individual $i$'s conditional indirect utility
from choice $j$ in market $t$. 
The additive $\epsilon _{ijt}$ is typically specified as a draw from a type-1 extreme value  or
normal distribution, yielding a mixed multinomial logit or probit model. Components $k$ of the random coefficient vector $\beta
_{ijt}$ are often specified as
\begin{equation}\label{eq betai}
    \beta _{ijt}^{(k)} = \beta_{0j}^{(k)}+\sum_{\ell =1}^{L}\beta_{zj}^{(k,\ell)}z_{i\ell t} + \beta_{\nu j}^{(k)}\nu _{it}^{(k)},
\end{equation}
where each $z_{i\ell t}$ represents an observable characteristic of
individual $i$, and each $\nu _{it}^{(k)}$ is a random variable with a
pre-specified distribution. Often, the coefficient on price
is also specified as varying with some observed consumer
characteristics $y_{it}$, such as income. A typical specification of $\alpha_{it}$ takes the form
\begin{equation}\label{eq alphai}
\ln(\alpha_{it}) = \alpha_0 + \alpha_{y} y_{it} + \alpha_{\nu}
\nu^{(0)}_{it}.
\end{equation}

With (\ref{eq betai}) and (\ref{eq alphai}),
we can rewrite $\left( \ref{eq RCRUM}\right) $ as
\begin{equation}
u_{ijt}=g_{j}\left( z_{it},x_t\right) +\xi _{jt}
  +\mu _{ijt},
\label{eq ARUM rewrite mu}
\end{equation}%
where
\[
g_{j}\left( z_{it},x_t\right) =\sum_{k}x_{jt}^{(k)}\sum_{\ell =1}^{L}\beta_{zj}^{(k,\ell)}z_{i\ell t}
\]
\begin{equation}\label{eq def mu}
  \mu _{ijt}= \sum_{k}x_{jt}^{(k)} \left(\beta^{(k)}_{0j}+\beta_{\nu j}^{(k)}\nu
_{it}^{(k)}\right)-p_{jt}\exp(\alpha_0 +\alpha_y y_{it}+\alpha_{\nu}\nu^{(0)}_{it})+\epsilon_{ijt}.
\end{equation}
Observe that all effects of $z_{it}$ and $\xi _{t}$ operate though
indices
\[
\gamma _{j}\left( z_{it},x_t,\xi _{t}\right) = g_{j}\left( z_{it},x_t\right)
+ \xi _{jt}\qquad
j=1,\dots ,J,
\]%
satisfying our Assumptions 1 and 4. It is easy to show that the
resulting choice probabilities satisfy Berry, Gandhi and Haile's (2013)
``connected substitutes'' condition with respect to the vector of indices
$\left( \gamma _{1}\left(
    z_{it},x_t,\xi _{t}\right) ,\dots ,\gamma _{J}\left( z_{it},x_t,\xi
    _{t}\right) \right) $; therefore, the injectivity of demand
required by Assumption 2 holds. Our assumptions require $L\geq J$.%
\footnote{If $L>J$, we can combine the ``extra'' components of $Z_{it}$ with income to redefine the partition of consumer observables as $(Y_{it},Z_{it})$ with $Z_{it}\in \mathbb{R}^J$. More generally, income and any extra components of $Z_{it}$ may affect both the index  (reintroducing $Y_{it}$ as an argument of $g$) and the coefficients on $(X_t,P_t)$.}  Injectivity of
$g(z_{it},x_t) = (g _{1}( z_{it},x_t) ,\dots ,g_{J}(z_{it},x_t))$
in $z_{it}$ (Assumption \ref{ass injective index}) might then be assumed as a primitive condition of the model or
 derived from other conditions, as in the example
we discuss below.\footnote{Although we discuss only the core assumptions, the technical conditions of Assumption 5 can be confirmed for these examples as long as $\supp Z_{it}|\{Y_{it},X_t\}$ is open and connected.}

Of course, our model does not rely on the linear structure of this
example, on any parametric distributional assumptions, or on a
representation of demand through random utility maximization.  But
this example connects our model to a large number of applications and
shows one way that the individual-level observables $z_{it}$ can
interact with product characteristics to generate preference
heterogeneity across consumers facing the same choice set (i.e., where
all $x_{jt},p_{jt}$ and $\xi _{jt}$ are fixed). Note that this
standard specification lacks features sometimes relied on in results
showing identification of discrete choice models: in addition to the
absence of individual characteristics that exclusively affect the
utility from one choice $j$, this model does not exhibit independence
between the ``error term'' $(\xi_{jt}+ \mu _{ijt})$ in (\ref{eq ARUM
  rewrite mu}) and any of the observables $z_{it}, x_t,p_t$.%
\footnote{This is true even without the demand shocks $\xi_{jt}$. For
  example, the individual ``taste shock'' vectors $\nu_{it}$ and
  $\epsilon_{it}$ are typically assumed independent across $i$ and
  $t$; however, $x_{jt}$ and $p_{jt}$ enter the composite error
  $\mu_{ijt}$. Likewise, $z_{it}$ may be correlated with $y_{it}$,
  which enters $\mu_{ijt}$. Furthermore, $x_{jt}$ and $p_{jt}$ may be
  correlated with changes in the distribution of $z_{it}$ across
  markets, introducing variation in this distribution with
  $\mu_{ijt}$}. %

\nocite{Ho_jobmkt}
To see another way that our index structure arises in practice,
consider Ho's (2009)  model of demand for health insurance.
Each consumer $i$ in market $t$ considers $J$ insurance plans as well
as the outside option of remaining uninsured. Each consumer has
a vector of observable characteristics $d_{it}$ (used below to define $z_{it})$.\footnote{Ho's data include measures of individual age, gender, income, home location, employment status, and industry of employment.}
Let $n_{jt}$ denote the set of hospitals in plan $j $'s
network, along with their characteristics (e.g., location and the
availability of speciality services like cardiac care). Each insurance
plan is associated with its network $n_{jt}$, an annual premium
$p_{jt}$, additional observed plan characteristics $x_{jt}$ (e.g., the
size of its physician network), and an unobservable $\xi _{jt}$.%

A consumer's insurance plan demand depends on her particular
likelihood of having of each type of hospital need (diagnosis), as
well as how her preferences over hospital characteristics will vary
with the type of need.  This gives each consumer $i$ an expected
utility $EU\left( n_{jt},d_{it}\right) $ for the option to use plan
$j$'s hospital network. Ho derives this expected utility from
auxiliary data on hospital choice (see also \cite*{Ho_demand} and
\cite*{Ho_Lee_Insurer} ).  From the perspective of identification,
this yields a known functional form for the consumer-specific measures
\[
z_{ijt}\equiv EU\left( n_{jt},d_{it}\right) 
\]%
entering consumer $i$'s conditional indirect utilities 
\begin{equation}
  \label{eq Ho logit}
u_{ijt}=\lambda z_{ijt}+x_{jt}\beta -\alpha(y_{it}) p_{jt}+\xi
_{jt}+\epsilon _{ijt}
\end{equation}%
for each plan $j$. Here $y_{it}\in d_{it}$ represents the consumer's income. Ho assumes each $\epsilon_{ijt}$ is an independent draw from a type-1 extreme value distribution, yielding a multinomial logit model.

Observe that in this example Ho combines data on the characteristics of consumers and choices with additional modeling to derive a scalar $z_{ijt}$ that
exclusively affects only the utility of choice $j$.\footnote{Our model would also allow the possibility that the function $EU$ here is not learned from auxiliary data; in that case our $g_j(z_{it},y_{it},x_t)$ would play the role of $EU\left(n_{jt},d_{it}\right)$, with $z_{it}=d_{it}$ and $n_{jt}\subset x_{jt}$.}   In this case, the injectivity of the index vector
$\gamma \left( z_{it},\xi _{t}\right)$ required by our Assumption 3 holds as long as $\lambda \neq 0$.   Satisfaction of our remaining assumptions follows as in the previous example.

\section{Identification}\label{sec identification}

We consider identification of the demand system
$$
\mathscr{s}(Z_{it},Y_{it},P_t,X_t,\Xi_t)
$$
and the conditional demand systems
$$
\bar{\mathscr{s}}(Z_{it},Y_{it},P_t;t)
$$
from observation of the choice decisions of the population of consumers $i$ in a population of markets $t$. The observables comprise $Z_{it},Y_{it}, P_{t},X_{t},Q_{it},$ and a vector of instruments $W_t$ discussed below. These observables imply observability of choice probabilities conditional on $(Z_{it},Y_{it},P_t,X_t)$ in each market $t$.

Because our arguments do not require variation in $Y_{it}$, in much of what follows we will fix $Y_{it}$ (conditional on $X_t$) at $y^0(X_{it})$. We proceed in three steps. First, in section \ref{sec id index} we present lemmas demonstrating identification of the function $g(\cdot,y^0(x),x)$ at each $x\in \mathcal{X}$. Second, in section \ref{sec conditional demand} we use this result to link latent market-level variation in $h(X_t,\Xi_t)$ to variation in the observed value of $Z_{it}$ required to produce a given conditional choice probability in each market. In particular, given instruments for prices, we show  that the realized values $h(x_t,\xi_t)$ can be pinned down in every market, making identification of the conditional demand systems $\bar{\mathscr{s}}(\cdot;t)$ in each market straightforward.  Finally, in section \ref{sec id demand} we show that $\mathscr{s}$ is also identified when one adds the usual assumption that $X_t$ is exogenous. Thus, after the initial setup and lemmas, the main results themselves follow relatively easily. 

Before proceeding, we provide some key definitions and observations.  For  $\left( p,x,\xi\right) \in \supp\left(P_{t}, X_t,\Xi _{t}\right) $ let
\begin{equation}\label{eq script S}
\mathcal{S}\left( p,x,\xi \right) =\sigma \left( g\left( \mathcal{Z}(y^0(x),x),y^0(x),x\right)+h(x,\xi) ,y^0(x),p,x\right).
\end{equation}%
Thus, $\mathcal{S}\left( p,x,\xi\right)$ denotes the support of choice
probabilities in any market $t$ for which $P_t=p,X_t=x$, and
$\Xi_{t}=\xi$ (holding $Y_{it}=y^0(x)$).%
\footnote{Because
  $\mathcal{Z}(y^0(x),x)$ is open, continuity and injectivity of
  $\sigma$ with respect to the index and of the index with respect to
  $Z_{it}$ imply (by invariance of domain) that
  $\mathcal{S}\left( p,x,\xi\right) $ is open.}
  By Assumptions \ref{ass invertible demand} and \ref{ass injective
    index}, for each $s\in\mathcal{S}\left(x,p, \xi \right) $ there
  must be a unique $z^{\ast }\in \mathcal{Z}(y^0(x),x)$ such that
  $\sigma \left( g\left( z^{\ast },y^0(x),x\right)
    +h(x,\xi),y^0(x),p,x\right) =s$. So for
  $\left( p,x,\xi\right) \in \supp\left(P_{t}, X_t,\Xi_{t}\right)$ and
  $s\in \mathcal{S}\left(p,x, \xi \right)$, we define the function
$$z^{\ast }\left(s;p,x,\xi\right) $$
implicitly by
\begin{equation}\label{eq define z star}
    \sigma \left( g\left( z^{\ast }\left(s;p,x,\xi\right),y^0(x),x \right) +h(x,\xi),y^0(x),p,x\right) =s.
\end{equation}

This definition leads to two observations that play key roles in what follows. First,  in each market $t$ the set $\mathcal{S}(p_t,x_t,\xi_t)$ and the values of $z^{\ast }\left( s;p_{t},x_t,\xi_{t}\right) $ for all $s\in\mathcal{S}\left( p_{t},x_t,\xi _{t}\right) $ are  observed, even though the value of the argument $\xi _{t}$ is not.  Second, by the invertibility of $\sigma$ (Assumption \ref{ass invertible demand}),  we have
\begin{equation}\label{eq inverted demand}
g\left( z^{\ast }\left( s;p,x,\xi\right),y^0(x),x \right) +h(x,\xi) =\sigma ^{-1}\left(s;y^0(x),p,x\right)
\end{equation}%
for all $\left(p,x, \xi \right) \in \supp\left(P_{t},X_t,\Xi_{t}\right)$ and $s\in \mathcal{S}\left( p,x,\xi\right)$.

\subsection{Key Lemmas}\label{sec id index}

Let $||\cdot ||$ denote the Euclidean norm.  We will require the following nondegeneracy condition.%

\begin{assumption}[Nondegeneracy]\label{ass support xi}
For each $x\in\mathcal{X}$, there exists  $p\in \supp P_{t}|\{X_t=x\}$ such that  $\supp \Xi _{t}|\{P_{t}=p,X_t=x\}$ contains an open subset of $\mathbb{R}^J$.
\end{assumption}

Assumption \ref{ass support xi} requires continuously distributed $\Xi_t$  but is otherwise mild. It rules out trivial cases in which conditioning on  $(P_t,X_t)$ indirectly fixes $\Xi_t$ as well. This nondegeneracy condition is implied by standard models of supply, where prices respond to continuous cost shifters or markup shifters (observed or unobserved), allowing the same equilibrium price vector $p$ to arise under different realizations of $\Xi_{t}$.
A key implication, exploited to prove Lemma \ref{lem open ball} below, follows from the definition (\ref{eq define h}): for each $x\in\mathcal{X}$ there exist $\epsilon >0$ and $p\in \supp P_{t}|\{X_t=x\}$ such that for any $d\in\mathbb{R}^{J}$ satisfying $||d||<\epsilon $, $\supp\Xi_t |\{P_t=p,X_t=x\}$ contains vectors $\xi $ and $\xi ^{\prime }$ satisfying $h(x,\xi) -h(x,\xi ^{\prime})=d.$

\begin{lemma}\label{lem open ball}
Let Assumptions \ref{ass index}--\ref{ass support xi} hold. For each $x\in\mathcal{X}$, there exist $p\in \supp P_{t}|\{X_t=x\}$ and 
$\Delta>0$ 
such that
for all  $z$ and $z^{\prime }$ in $\mathcal{Z}(y^0(x),x)$ satisfying
$\left\vert \left\vert  z^{\prime } - z\right\vert \right\vert <\Delta$, 
there exist a choice probability vector $s$ and vectors  $\xi$ and  $\xi ^{\prime }$ in $\supp\Xi _{t}|\{P_t=p,X_t=x\}$ such that $z=z^{\ast }\left( s;p,x,\xi\right)$ and $z^{\prime}=z^{\ast }\left( s;p,x,\xi ^{\prime }\right)$. Furthermore, such $(\Delta,p)$ are identified.
\end{lemma}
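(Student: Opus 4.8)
The plan is to combine the consequence of Assumption~\ref{ass support xi} recorded just above with the uniform continuity of $g$ from Assumption~\ref{ass smoothness}(ii), and then to re-express the conclusion in observable terms for the identification claim. Fix $x\in\mathcal{X}$. By that consequence of Assumption~\ref{ass support xi}, there are $\epsilon>0$ and $p\in\supp P_{t}|\{X_t=x\}$ such that every $d\in\mathbb{R}^J$ with $\|d\|<\epsilon$ can be written as $h(x,\xi)-h(x,\xi')$ for some $\xi,\xi'\in\supp\Xi_{t}|\{P_{t}=p,X_t=x\}$. Using the uniform continuity of $z\mapsto g(z,y^0(x),x)$ on $\mathcal{Z}(y^0(x),x)$, pick $\Delta>0$ so that $\|z'-z\|<\Delta$ forces $\|g(z',y^0(x),x)-g(z,y^0(x),x)\|<\epsilon$ for all $z,z'\in\mathcal{Z}(y^0(x),x)$. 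I claim this $(\Delta,p)$ works.

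For the existence part, take $z,z'\in\mathcal{Z}(y^0(x),x)$ with $\|z'-z\|<\Delta$ and put $d=g(z',y^0(x),x)-g(z,y^0(x),x)$; since $\|d\|<\epsilon$, choose $\xi,\xi'\in\supp\Xi_{t}|\{P_{t}=p,X_t=x\}$ with $h(x,\xi)-h(x,\xi')=d$, so that $g(z,y^0(x),x)+h(x,\xi)=g(z',y^0(x),x)+h(x,\xi')$. Set $s=\sigma\left(g(z,y^0(x),x)+h(x,\xi),y^0(x),p,x\right)$. Since $z\in\mathcal{Z}(y^0(x),x)$ we have $s\in\mathcal{S}(p,x,\xi)$, and, by the previous equality together with $z'\in\mathcal{Z}(y^0(x),x)$, also $s=\sigma\left(g(z',y^0(x),x)+h(x,\xi'),y^0(x),p,x\right)\in\mathcal{S}(p,x,\xi')$. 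The uniqueness of the solution to~(\ref{eq define z star})---which, as noted when $z^{*}$ was defined, follows from Assumptions~\ref{ass invertible demand} and~\ref{ass injective index}---then gives $z=z^{*}(s;p,x,\xi)$ and $z'=z^{*}(s;p,x,\xi')$, as required.

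For the identification claim, recall that in each market $t$ both the set $\mathcal{S}(P_{t},X_t,\Xi_{t})$ and the map $s\mapsto z^{*}(s;P_{t},X_t,\Xi_{t})$ on it are observed, being the image and the inverse of the observed function $z\mapsto\bar{\mathscr{s}}(z,y^0(X_t),P_{t};t)$; likewise $\mathcal{Z}(y^0(x),x)=\supp Z_{it}|\{Y_{it}=y^0(x),X_t=x\}$ is observed. Hence, for a candidate $(\Delta,p)$, whether ``for all $z,z'\in\mathcal{Z}(y^0(x),x)$ with $\|z'-z\|<\Delta$ there exist $\xi,\xi'$ realized in markets with $P_{t}=p$, $X_t=x$ and a vector $s\in\mathcal{S}(p,x,\xi)\cap\mathcal{S}(p,x,\xi')$ with $z=z^{*}(s;p,x,\xi)$ and $z'=z^{*}(s;p,x,\xi')$'' holds is determined by the distribution of observables. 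The existence argument exhibits some $(\Delta,p)$ meeting this condition, so a valid pair can be selected in an identified way.

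I expect the main obstacle to be this identification step: one has to check that the $\xi,\xi'$ produced from Assumption~\ref{ass support xi} may legitimately be replaced by values realized in observed markets, and that the alternating ``for all $z,z'$ / there exist $\xi,\xi',s$'' quantifier structure of the conclusion is genuinely verifiable from the observed market-by-market objects $\mathcal{S}(\cdot)$ and $z^{*}(\cdot)$. The existence part itself is essentially bookkeeping once the $\epsilon$--$\Delta$ transfer between $g$ and $h$ is in place.
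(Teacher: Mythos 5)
Your proposal is correct and follows essentially the same route as the paper's proof: invoke the implication of Assumption \ref{ass support xi} stated just before the lemma to match any small index difference $g(z',y^0(x),x)-g(z,y^0(x),x)$ with a difference $h(x,\xi)-h(x,\xi')$, use uniform continuity of $g(\cdot,y^0(x),x)$ to convert the $\epsilon$-bound on index differences into a $\Delta$-bound on $\|z'-z\|$, and read off $z=z^{*}(s;p,x,\xi)$, $z'=z^{*}(s;p,x,\xi')$ from the definition of $z^{*}$. Your treatment of the identification claim is, if anything, slightly more explicit than the paper's one-line observation that satisfaction of the relevant conditions is observable.
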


\begin{proof}
Fix a value of $x\in\mathcal{X}$. By Assumption \ref{ass support xi}, there exist
$p\in \supp P_{t}|\{X_t=x\}$ 
and $\epsilon >0$ 
such that for any  $z$ and $z^{\prime }$ in $\mathcal{Z}(y^0(x),x)$ for which
 \begin{equation}\label{eq gdif lt epsilon}
 \left\vert \left\vert g\left( z^{\prime },y^0(x),x\right) - g\left(z,y^0(x),x\right) \right\vert \right\vert <\epsilon,
 \end{equation}
 there exist $\xi $ and $\xi^{\prime }$ in  $\supp\Xi_{t}|\{P_{t}=p,X_t=x\}$ such that 
 $$h(x,\xi) -h(x,\xi^{\prime}) =g\left(z^{\prime},y^0(x),x\right) -g\left( z,y^0(x),x\right),$$ 
 i.e., $\gamma(z',y^0(x),x,\xi')=\gamma(z,y^0(x),x,\xi).$
Taking $$s=\sigma \left( \gamma(z',y^0(x),x,\xi') ,y^0(x),p,x\right) =\sigma \left(\gamma(z,y^0(x),x,\xi),y^0(x),p,x\right),$$ the definition (\ref{eq define z star}) implies that
\begin{equation}\label{eq two zs}
    z=z^{\ast }\left( s;p,x,\xi\right) \quad \text{and} \quad z^{\prime}=z^{\ast }\left( s;p,x,\xi ^{\prime }\right).
\end{equation}  
By uniform continuity of $g(\cdot, y^0(x),x)$, there exists $\Delta>0$ such that (\ref{eq gdif lt epsilon}) holds whenever 
\begin{equation}\label{eq zdiff lt Delta}
    \left\vert\left\vert  z^{\prime } - z\right\vert \right\vert <\Delta.
\end{equation}
Because  satisfaction of (\ref{eq zdiff lt Delta}) and (\ref{eq two zs}) 
is observable, all $(\Delta,p)$ allowing satisfaction of these conditions are identified. 
\end{proof}

With this result in hand, we can use equation (\ref{eq inverted demand}) to relate partial derivatives of $g(z,y(x),x)$ at any point $z$  to those at nearby points $z^{\prime}$ by examining the change in consumer characteristics required to create a given change in the vector of choice probabilities.

\begin{lemma}\label{lem g id in ball}
Let Assumptions \ref{ass index}--\ref{ass support xi} hold. Then for every $x\in \mathcal{X}$ there exists a known $\Delta >0$ such that for almost all $z,z^{\prime }\in \mathcal{Z}(y^0(x),x)$ satisfying (\ref{eq zdiff lt Delta}) the matrix $\left[\frac{\partial g(z,y^0(x),x)}{\partial z}\right] ^{-1}\left[ \frac{\partial g(z^{\prime },y^0(x),x)}{\partial z}\right]$ is identified.
\end{lemma}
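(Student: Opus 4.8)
The plan is to combine Lemma \ref{lem open ball} with the identity (\ref{eq inverted demand}), exploiting the fact that the right-hand side of (\ref{eq inverted demand}), $\sigma^{-1}(s;y^0(x),p,x)$, does not depend on $\xi$. Fix $x\in\mathcal{X}$ and let $(\Delta,p)$ be the (identified) pair delivered by Lemma \ref{lem open ball}. For almost all $z,z'\in\mathcal{Z}(y^0(x),x)$ with $\|z'-z\|<\Delta$, that lemma supplies a choice probability vector $s$ and realizations $\xi,\xi'\in\supp\Xi_t|\{P_t=p,X_t=x\}$ with $z=z^{\ast}(s;p,x,\xi)$ and $z'=z^{\ast}(s;p,x,\xi')$. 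Since $(p,x,\xi)$ and $(p,x,\xi')$ lie in $\supp(P_t,X_t,\Xi_t)$, the maps $s\mapsto z^{\ast}(s;p,x,\xi)$ and $s\mapsto z^{\ast}(s;p,x,\xi')$ are each identified: in any market with $P_t=p$ and $X_t=x$, after conditioning on $Y_{it}=y^0(x)$, the choice probabilities are observed as a function of $z$, and (by Assumptions \ref{ass invertible demand}--\ref{ass injective index}) $z^{\ast}(\cdot;p,x,\Xi_t)$ is precisely the inverse of that observed map. So given the target pair $(z,z')$ one can locate from the data two such markets and a common $s$ realizing the configuration of Lemma \ref{lem open ball}.

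Next I would differentiate (\ref{eq inverted demand}) with respect to $s$. Evaluating at the common $s$, once with $\xi$ and once with $\xi'$, and using that $g(z^{\ast}(s;p,x,\xi),y^0(x),x)+h(x,\xi)=g(z^{\ast}(s;p,x,\xi'),y^0(x),x)+h(x,\xi')=\sigma^{-1}(s;y^0(x),p,x)$, the chain rule gives
\[
\left[\frac{\partial g(z,y^0(x),x)}{\partial z}\right]\left[\frac{\partial z^{\ast}(s;p,x,\xi)}{\partial s}\right]
=\left[\frac{\partial g(z',y^0(x),x)}{\partial z}\right]\left[\frac{\partial z^{\ast}(s;p,x,\xi')}{\partial s}\right],
\]
since both sides equal $\partial\sigma^{-1}(s;y^0(x),p,x)/\partial s$. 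That the relevant Jacobians are nonsingular, and that $z^{\ast}(\cdot;p,x,\cdot)$ and $\sigma^{-1}(\cdot;y^0(x),p,x)$ are continuously differentiable near $s$, follows from Assumption \ref{ass smoothness}(iii)--(iv) via the implicit/inverse function theorem (and invariance of domain); this is exactly where the qualifier ``for almost all'' is needed, since Assumption \ref{ass smoothness}(iv) holds only almost surely. Rearranging,
\[
\left[\frac{\partial g(z,y^0(x),x)}{\partial z}\right]^{-1}\left[\frac{\partial g(z',y^0(x),x)}{\partial z}\right]
=\left[\frac{\partial z^{\ast}(s;p,x,\xi)}{\partial s}\right]\left[\frac{\partial z^{\ast}(s;p,x,\xi')}{\partial s}\right]^{-1}.
\]
The right-hand side is a product of Jacobians of identified functions evaluated at an identified point, hence identified; and because the left-hand side depends only on $(z,z',x)$, its value is unambiguous regardless of which admissible triple $(s,\xi,\xi')$ the data provide.

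I expect the main obstacle to be bookkeeping rather than any deep difficulty: carefully justifying that the inverse map $z^{\ast}(\cdot;p,x,\Xi_t)$ and its $s$-derivative are genuinely recovered from observables in each relevant market (resting on invertibility, Assumptions \ref{ass invertible demand}--\ref{ass injective index}, and on observing the choice-probability-versus-$z$ map once $Y_{it}=y^0(x)$ is fixed), and handling ``almost all'' uniformly---i.e., showing that the exceptional set of pairs $(z,z')$ for which some relevant Jacobian in some relevant market is singular has measure zero, so that the nonsingular-Jacobian points guaranteed almost surely by Assumption \ref{ass smoothness}(iv) are simultaneously available at $z$, at $z'$, and at the common index $\sigma^{-1}(s;y^0(x),p,x)$. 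Everything else is the chain-rule computation above.
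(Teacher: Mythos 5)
Your proposal is correct and follows essentially the same route as the paper's proof: take the identified pair $(p,\Delta)$ from Lemma \ref{lem open ball}, locate two markets sharing $(p,x)$ and a common $s$ at which $z$ and $z'$ are the observed conditioning values, differentiate (\ref{eq inverted demand}) with respect to $s$ in each market, and equate the two expressions for $\partial\sigma^{-1}/\partial s$ to express the target matrix as a ratio of observed Jacobians $\partial z^{\ast}/\partial s$. The extra bookkeeping you flag (identifiability of $z^{\ast}(\cdot;p,x,\xi)$ as the inverse of the observed choice-probability map, and the role of Assumption \ref{ass smoothness}(iv) in the ``almost all'' qualifier) matches the paper's treatment.
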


\begin{proof2}
Given any $x\in \mathcal{X}$,   take a (known) $(p,\Delta)$ as in Lemma \ref{lem open ball}. Consider markets $t$ and $t^{\prime }$ in which $P_{t}=P_{t^{\prime }}=p$ but, for some choice probability vector $s$,
\begin{equation}\label{eq two z}
z=z^{\ast }\left( s;p,x,\xi _{t}\right) \neq z^{\prime }=z^{\ast }\left(s;p,x,\xi _{t^{\prime }}\right),
\end{equation}
revealing that $\xi_{t}\neq \xi _{t^{\prime }}$.
Lemma \ref{lem open ball} ensures that such $t,t',$ and $s$ exist for all $z,z^{\prime }\in \mathcal{Z}(y^0(x),x)$ satisfying  (\ref{eq zdiff lt Delta}).
And although $\xi_t$ and $\xi_{t^{\prime}}$ are latent, the identities
of markets $t$ and $t'$ satisfying (\ref{eq two z}) are observed, as
are the associated values of $s$, $z^*(s;p,x,\xi_t)$, and
$z^*(s;p,x,\xi_{t'})$.  Differentiating (\ref{eq inverted demand})
with respect to the vector $s$ within these two markets,%
we obtain
\begin{equation*}
\frac{\partial g\left( z,y^0(x),x\right) }{\partial z}\frac{\partial z^{\ast }\left(s;p,x,\xi _{t}\right) }{\partial s} = \frac{\partial\sigma^{-1}\left(s;y^0(x),p,x\right) }{\partial s}
\end{equation*}%
and%
\begin{equation*}  
\frac{\partial g\left( z^{\prime },y^0(x),x\right) }{\partial z}\frac{\partial z^{\ast }\left( s;p,x,\xi _{t^{\prime }}\right) }{\partial s} = \frac{\partial\sigma^{-1}\left( s;y^0(x),p,x\right) }{\partial s}.
\end{equation*}%
Thus, recalling Assumption \ref{ass smoothness}, for almost all such $(z,z')$ we have
\begin{equation*}  
\left[ \frac{\partial g\left( z^{\prime },y^0(x),x\right) }{\partial z}\right] ^{-1}\frac{\partial g\left( z,y^0(x),x\right) }{\partial z} =
\frac{\partial z^{\ast } \left(s;p,x,\xi_{t^{\prime}}\right) }{\partial s} \left[ \frac{\partial z^{\ast }\left( s;p,x,\xi _{t}\right) } {\partial s}\right] ^{-1}.
\end{equation*}%
The matrices on the right-hand side are observed.
\end{proof2}

This leads us to the main result of this section, obtained by
connecting (for each value of $x$) the matrix products
$\left[\frac{\partial g(z,y^0(x),x)}{\partial z}\right] ^{-1}\left[
  \frac{\partial g(z^{\prime },y^0(x),x)}{\partial z}\right]$
identified in Lemma \ref{lem g id in ball} to the known (normalized)
value of the matrix
$\left[ \frac{\partial g(z,y^0(x),x)}{\partial z}\right]$ at
$z=z^0(x)$.

\begin{lemma}\label{thm g identified}
Under Assumptions \ref{ass index}--\ref{ass support xi}, $g(\cdot,y^0(x),x)$ is identified on $\mathcal{Z}(y^0(x),x)$  for all $x\in\mathcal{X}$.
\end{lemma}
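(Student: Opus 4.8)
The plan is to pin down the Jacobian $\partial g(\cdot,y^0(x),x)/\partial z$ at every point of $\mathcal{Z}(y^0(x),x)$ by propagating the local information in Lemma \ref{lem g id in ball} outward from the normalized base point $z^0(x)$, and then to recover $g(\cdot,y^0(x),x)$ itself by integrating that Jacobian along paths starting at $z^0(x)$, where $g$ is known to vanish. Fix $x\in\mathcal{X}$ throughout, and abbreviate $g(z)\equiv g(z,y^0(x),x)$ and $D(z)\equiv\partial g(z,y^0(x),x)/\partial z$. The chosen normalization gives $g(z^0(x))=0$ (eq.\ \ref{eq location g}) and $D(z^0(x))=I$ (eq.\ \ref{eq normalize g matrix}), while Lemma \ref{lem g id in ball} supplies a \emph{known} $\Delta>0$ such that $D(z)^{-1}D(z')$ is identified for almost every pair $z,z'\in\mathcal{Z}(y^0(x),x)$ with $||z'-z||<\Delta$.

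First I would identify $D$ on a co-null subset of $\mathcal{Z}(y^0(x),x)$ by chaining. Since $\mathcal{Z}(y^0(x),x)$ is open and connected (Assumption \ref{ass smoothness}(i)), it is path-connected, so any target $z$ can be joined to $z^0(x)$ by a path whose compact image is covered by finitely many radius-$\Delta/2$ balls; picking a point from each successive ball produces a finite chain $z^0(x)=\zeta_0,\zeta_1,\dots,\zeta_n=z$ with $||\zeta_k-\zeta_{k-1}||<\Delta$ for all $k$. Because $D(\zeta_0)=I$, the ordered matrix product of the quantities $D(\zeta_{k-1})^{-1}D(\zeta_k)$ (each identified by Lemma \ref{lem g id in ball}) telescopes,
\[
\prod_{k=1}^{n}\bigl[D(\zeta_{k-1})^{-1}D(\zeta_k)\bigr]=D(\zeta_0)^{-1}D(\zeta_n)=D(z),
\]
so $D(z)$ is identified, provided every consecutive pair $(\zeta_{k-1},\zeta_k)$ avoids the measure-zero exceptional set of Lemma \ref{lem g id in ball}. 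The slack in the covering lets us choose the $\zeta_k$ generically, so $D$ is identified at almost every $z$; since $g$ is continuously differentiable in $z$ (Assumption \ref{ass smoothness}(ii)), $D$ is continuous, and a continuous function known on a dense set is known everywhere. Hence $D$ is identified on all of $\mathcal{Z}(y^0(x),x)$.

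It then remains to recover $g$ from its identified Jacobian and the known base value $g(z^0(x))=0$. For any $z\in\mathcal{Z}(y^0(x),x)$, choose a piecewise-$C^1$ path $\zeta\colon[0,1]\to\mathcal{Z}(y^0(x),x)$ from $z^0(x)$ to $z$ (possible by connectedness); since $D$ is the total derivative of the $C^1$ map $g$, the fundamental theorem of calculus gives
\[
g(z)=\int_0^1 D(\zeta(\tau))\,\zeta'(\tau)\,d\tau,
\]
and the right-hand side is identified. As $x\in\mathcal{X}$ was arbitrary, $g(\cdot,y^0(x),x)$ is identified on $\mathcal{Z}(y^0(x),x)$ for every $x\in\mathcal{X}$.

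I expect the main obstacle to be the bookkeeping around the ``almost every pair'' qualifier inherited from Lemma \ref{lem g id in ball}: one must check that for a co-null set of targets $z$ the chain points can indeed be taken outside the exceptional null set — a Fubini-type argument, which also uses that the normalization point $z^0(x)$ may itself be chosen generically in the support — and then that continuity of $\partial g/\partial z$ genuinely upgrades a.e.\ identification of the Jacobian to identification everywhere. By contrast, the path-covering step and the line-integral reconstruction of $g$ are routine given the smoothness in Assumption \ref{ass smoothness}.
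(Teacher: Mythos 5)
Your proposal is correct and follows essentially the same route as the paper: both chain overlapping $\Delta$-neighborhoods from $z^0(x)$ to an arbitrary $z$, telescope the Jacobian ratios from Lemma \ref{lem g id in ball} against the normalization $\partial g(z^0(x),y^0(x),x)/\partial z=I$, and then recover $g$ by a line integral using the boundary condition $g(z^0(x),y^0(x),x)=0$. The paper uses a lattice-indexed cover $\mathcal{B}_\tau$ and a simple-chain argument where you use a finite ball cover of a compact path, and your explicit handling of the ``almost all'' bookkeeping via generic chain points plus continuity of the Jacobian is, if anything, slightly more careful than the paper's.
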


\begin{proof2}
For $\epsilon>0$, let $\mathcal{B}\left(b,\epsilon \right) $ denote an open ball in $\mathbb{R}^{J}$ of radius $\epsilon$, centered at $b.$ Take any $x\in\mathcal{X}$ and associated $\Delta>0$ as in Lemma \ref{lem g id in ball}. For each vector of integers $\tau \in \mathbb{Z}^{J}$, define the set
\begin{equation*}
\mathcal{B}_{\tau }=\mathcal{Z}(y^0(x),x)\, \cap\, \mathcal{B}\left( z^{0}(x) + \frac{\tau \Delta}{J}, \frac{\Delta}{2} \right).
\end{equation*}%
By construction, all $z$ and $z^{\prime }$ in any given set $\mathcal{B}_{\tau }$ satisfy (\ref{eq zdiff lt Delta}). So by Lemma \ref{lem g id in ball}, the value of
\begin{equation}\label{eq ratio}
\left[ \slashfrac{\partial g(z,y^0(x),x)}{\partial z}\right]^{-1} \left[\slashfrac{\partial g(z^{\prime },y^0(x),x)}{\partial z}\right]
\end{equation}
is known for almost all $z$ and $z^{\prime }$ in any set
$\mathcal{B}_{\tau }$.  Because
$\cup_{\tau \in \mathbb{Z}^{J}}\mathcal{B}_{\tau }$ forms an open
cover of $ \mathcal{Z}(y^0(x),x)$, %
given any $z\in \mathcal{Z}(y^0(x),x)$ there exists a simple chain of
open sets $\mathcal{B}_{\tau }$ in $\mathcal{Z}(y^0(x),x)$ linking the
point $z^0(x)$ to $z$.%
\footnote{See, e.g., van Mill (2002, Lemma 1.5.21).}  Thus,
$$\left[ \slashfrac{\partial g(z,y^0(x),x)}{\partial z}\right]^{-1} \left[\slashfrac{\partial g(z^0(x),y^0(x),x)}{\partial z}\right]
$$
is known for almost all $z\in\mathcal{Z}(y^0(x),x)$. With the normalization (\ref{eq normalize g matrix}) and the continuity of $\slashfrac{\partial g(z,y^0(x),x)}{\partial z}$ with respect to $z$,%
the result then follows from the fundamental theorem of calculus for line integrals and the boundary condition (\ref{eq location g}).%
\end{proof2}

\nocite{van-Mill-topology}

Before moving to identification of conditional demand, we pause to
point out that our constructive identification of $g(\cdot,y^0(x),x)$ used only
a single price vector $p$ at each value of $x$---that required by Assumption \ref{ass
  support xi}. In typical models of supply this condition would hold
for almost all price vectors in the support of $P_t|\{X_t=x\}$.  In addition to
providing falsifiable restrictions, this indicates a form of
redundancy that would typically be exploited by estimators used in
practice. Similarly, our proof of Lemma \ref{thm g identified} used,
for each $z\in \mathcal {Z}(y^0(x),x)$, only one of infinitely many paths
between $z^0$ and $z$; integrating along any such path must yield the
same function $g(\cdot,y^0(x),x)$ at each $x$.

\subsection{Identification of Conditional Demand}\label{sec conditional demand}

We demonstrate identification of the conditional demand functions $\bar{\mathscr{s}}(\cdot;t)$  under two additional conditions.  The first is a requirement of sufficient variation in the consumer-level observables $Z_{it}$.

\begin{assumption}[Common Choice Probability]\label{ass CCP}
For each $x \in \mathcal{X}$, there exists a choice probability vector $s^{\ast }(x)$ such that $s^{\ast }(x)\in \mathcal{S}\left( p,x,\xi \right) $ for all $\left(p, \xi \right) \in \supp \left( P_{t},\Xi _{t}\right)|\{X_t=x\}$.
\end{assumption}

Assumption \ref{ass CCP} requires that, at each $x\in \mathcal{X}$,  there exist some choice
probability vector $s^{\ast }(x)$ that is common to all markets---that
\begin{equation*}
  \bigcap_{\left( p,\xi \right) \,\in\, \supp \left( P_{t},\Xi _{t}\right)|\{X_t=x\}
}\mathcal{S}\left( p,x,\xi \right)
\end{equation*} 
be nonempty. The nondegeneracy of
each set $\mathcal{S}\left( p_t,x_t,\xi_t\right) $ (recall (\ref{eq
  script S})) reflects variation in $Z_{it}$ across its
support. Assumption \ref{ass CCP} requires enough variation in
$Z_{it}$ that for some $s^{\ast }(x)$ we have
$s^{\ast }(x)\in \mathcal{S}\left( p_{t},x,\xi _{t}\right) $ for all
$\left( p_{t},\xi_{t}\right) $ in their support conditional on $X_t=x$. 

The strength of this assumption
depends on the joint support of $\left( P_{t},\Xi _{t}\right)$ given $\{X_t=x\}$  and on
the relative impacts of $\left( Z_{it},\Xi _{t},P_{t}\right) $ on
choice behavior. Observe that $P_{jt}$ and $\Xi _{jt}$ typically will
have opposing impacts and will be positively dependent conditional on $X_t$ under
equilibrium pricing behavior; thus, large support for
$g\left( Z_{it},y^0(x),x\right) $ may not be required even if $\Xi _{t}$ were
to have large support. Indeed, we can contrast our assumption with a requirement of 
special regressors with large support: the latter would imply that
\textit{every} interior choice probability vector $s$ is a common
choice probability for all $x$; we require only a single common choice probability at each $x$.  Note also that, because choice probabilities
conditional on $(Z_{it},Y_{it})$ are observable in all markets, Assumption~\ref{ass
  CCP} is verifiable.%
\footnote{See \cite*{Berry-Haile-simulteqn} for a formal definition of
  verifiability.} This is important on its
  own. And, because the choice of each $y^0(x)$ is arbitrary,
it implies that we require only existence (for each $x$) of one such
$y^0(x) \in \mathcal{Y}$ such that Assumption \ref{ass CCP}
holds.\footnote{When more than one such value $y^0(x)$ exists, or when
  there is more than one common choice probability vector $s^*$, this
  introduces additional falsifiable restrictions.} Finally, an important observation for what follows is that
the values of any common choice probability vectors $s^*(x)$ may be treated
as known.

Our second requirement is existence of instruments for prices satisfying the standard nonparametric IV conditions.

\begin{assumption}[Instruments for Prices]\label{ass exclusion}\label{ass completeness}\label{ass IV} \phantom{b} \newline
(i) $E\left[ h_j(X_t,\Xi _{jt})|X_t,W_{t}\right] = E\left[ h_j(X_t,\Xi _{jt})|X_t\right]$ almost surely  for all $j=1,\dots ,J$;\newline
(ii) In the class of functions $\Psi\left( X_t, P_{t}\right) $ with finite expectation,\newline $E\left[\Psi\left( X_t,P_{t}\right) |X_t,W_{t}\right] =0$ almost surely implies $\Psi\left( X_t, P_{t}\right) =0$ almost surely.
 \end{assumption}

Part (i) of
Assumption \ref{ass IV} is the exclusion
restriction, requiring that variation in $W_{t}$ not alter the mean of
the latent $h(X_t,\Xi_t)$ conditional on $X_t$.  Recall that    $E[h(X_t,\Xi_t)|X_t]=0$ by construction; thus part (i) implies
\begin{equation}\label{eq NP IV}
    E\left[ h_j(X_t,\Xi _{jt})|X_t,W_{t}\right]=0 \quad \text{a.s.\ for all $j$.}
\end{equation}
This is true regardless of whether $X_t$ itself is exogenous. Of course, one must be cautious about satisfaction of part (i) when $X_t$ is thought to be endogenous. In general, candidate instruments that are properly excludable unconditionally may not be so conditional on an endogenous control. We discuss this further below and devote the appendix to a detailed discussion of when  standard instruments for prices will (or will not)  satisfy the exclusion requirement when $X_t$ is endogenous.  Part (ii) is a standard completeness
condition---the nonparametric analog of the classic rank condition for
linear regression. For example, \cite*{NeweyPowell2003} have shown
that under mean independence (the analog of (\ref{eq NP IV}) here), completeness is necessary and sufficient
for identification in separable nonparametric regression. The
following result demonstrates that, given existence of a common choice probability vector
$s^*$, the same instrumental variables conditions suffice here to
allow identification of $h_j(x_t,\xi_{jt})$ for all $j$ and $t$.

\begin{lemma}\label{lem residuals identified}
Under Assumptions \ref{ass index}--\ref{ass completeness}, the scalar $h_j(x_t,\xi_{jt})$ is identified for all $j$ and $t$.
\end{lemma}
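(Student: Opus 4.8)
The plan is to use the common choice probability from Assumption~\ref{ass CCP} to strip the inverted demand of its dependence on the latent $\xi_t$, reducing the problem to a separable nonparametric IV regression in which the completeness condition does the work. Fix $x\in\mathcal{X}$ and let $s^{*}=s^{*}(x)$; as noted above, $s^{*}(x)$ may be treated as known. In any market $t$ with $X_t=x$, Assumption~\ref{ass CCP} gives $s^{*}\in\mathcal{S}(p_t,x,\xi_t)$, so by Assumptions~\ref{ass invertible demand} and~\ref{ass injective index} there is a unique $z^{*}(s^{*};p_t,x,\xi_t)\in\mathcal{Z}(y^0(x),x)$ satisfying~(\ref{eq define z star}). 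This value is \emph{observed}: it is simply the value of $Z_{it}$ at which the conditional choice probabilities $\bar{\mathscr{s}}(\,\cdot\,,y^0(x),p_t;t)$---observed in market $t$---equal $s^{*}$. Since Lemma~\ref{thm g identified} identifies $g(\,\cdot\,,y^0(x),x)$, the vector
\[
G_t \;:=\; g\!\left(z^{*}(s^{*};p_t,x,\xi_t),\,y^0(x),\,x\right)\in\mathbb{R}^{J}
\]
is observed in every such market, and letting $x$ range over $\mathcal{X}$ shows that $G_{t}$ is an observed function of the data in every market $t$.

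Next I would specialize the inverted-demand identity~(\ref{eq inverted demand}) to $s=s^{*}$ and read it componentwise:
\[
G_{jt} \;=\; \sigma_j^{-1}\!\left(s^{*}(x);\,y^0(x),\,p_t,\,x\right)\;-\;h_j(x,\xi_{jt}),\qquad j=1,\dots,J .
\]
The key observation is that, because $s^{*}(x)$ and $y^{0}(x)$ depend only on $x$, the function $\psi_j(p,x):=\sigma_j^{-1}\!\left(s^{*}(x);y^0(x),p,x\right)$ depends on the data only through $(P_t,X_t)$ and \emph{not} on $\xi_t$. Hence the display above is a separable nonparametric regression
\[
G_{jt} \;=\; \psi_j(P_t,X_t)\;+\;\big({-}h_j(X_t,\Xi_{jt})\big),
\]
with $G_{jt}$ observed, regressors $(P_t,X_t)$, instruments $(X_t,W_t)$, and---by~(\ref{eq NP IV})---disturbance satisfying $E[\,{-}h_j(X_t,\Xi_{jt})\mid X_t,W_t\,]=0$ almost surely, whether or not $X_t$ is exogenous.

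From here the argument is standard. Under this mean-independence condition, the completeness requirement of Assumption~\ref{ass completeness}(ii) is exactly what identifies the regression function of a separable nonparametric IV model (cf.\ \cite*{NeweyPowell2003}), so $\psi_j$ is identified on $\supp(P_t,X_t)$; the recovered disturbance then gives $h_j(x_t,\xi_{jt})=\psi_j(p_t,x_t)-G_{jt}$ in every market $t$, and since $x$ was arbitrary this proves the claim. I expect the conceptual crux to be the reduction in the second paragraph---recognizing that conditioning on a \emph{common} choice probability is precisely what eliminates the latent $\xi_t$ from $\sigma^{-1}$ and leaves an equation whose only unknown is a function of the instrumented variables $(P_t,X_t)$. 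The remaining matters are routine: that $z^{*}(s^{*};p_t,x,\xi_t)$, and hence $G_{jt}$, are genuinely observable, and a mild integrability check placing $\psi_j$ in the class covered by Assumption~\ref{ass completeness}(ii) (which follows from $h_j$ and $G_{jt}$ having finite means).
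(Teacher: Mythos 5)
Your proposal is correct and follows essentially the same route as the paper's proof: specialize the inverted-demand identity (\ref{eq inverted demand}) at the common choice probability $s^{*}(x_t)$, note that the left-hand side $g(z^{*}(s^{*}(x_t);p_t,x_t,\xi_t),y^0(x_t),x_t)$ is observable via Lemma \ref{thm g identified}, and treat the result as a separable nonparametric IV regression whose unknown function $\psi_j$ (the paper's $f_j$) is identified by completeness \`a la Newey--Powell, after which $h_j(x_t,\xi_{jt})$ is recovered as the residual. The only additions beyond the paper's argument are your explicit remarks on observability of $z^{*}$ and the integrability check, both of which are consistent with the paper's setup.
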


\begin{proof2}
Taking $x=x_t, p=p_t, \xi=\xi_t$ and $s=s^{\ast }(x_t)$ in equation (\ref{eq inverted demand}), we have
$$g\left( z^{\ast }\left( s^{\ast }(x_t);p_{t},x_t,\xi _{t}\right),y^0(x_t),x_t \right) =\sigma^{-1}\left( s^{\ast }(x_t);y^0(x_t),p_{t},x_t\right) -h(x_t,\xi_{t}).$$ 
Thus, for all $t$ and  each $ j=1,\dots ,J$, 
\begin{equation}\label{eq regression}
g_{j}\Bigl( z^{\ast }\left( s^{\ast }(x_t);p_{t},x_t,\xi _{t}\right),y^0(x_t),x_t \Bigr) = f_j(x_t,p_t) - e_{jt}
\end{equation}%
where 
$f_j(x_t,p_t) \equiv \sigma
_{j}^{-1}\left( s^{\ast }(x_t);y^0(x_t),p_{t},x_t\right)$
and
$e_{jt}\equiv h_j(x_t,\xi_{jt}).$
By Lemma \ref{thm g identified} the left side of (\ref{eq regression}) is known (recall that the values of each $z^{\ast }\left( s^{\ast }(x_t);p_{t},x_t,\xi _{t}\right) $ are observable, even though the value of each $\xi _{t}$ is not). Thus, for each $j$ this equation takes the form of a separable nonparametric regression model. Given Assumption \ref{ass completeness}, identification of each function $f_j $ follows immediately from the identification result of \cite*{NeweyPowell2003}.  This implies identification of each $e_{jt}$ (i.e., $h_j(x_t,\xi_{jt})$) as well.
\end{proof2}

Identification of the conditional demand functions $\bar{\mathscr{s}}(\cdot;t)$ for all $t$ now follows easily.

\begin{theorem}\label{thm conditional demand}
Under Assumptions \ref{ass index}--\ref{ass completeness}, $\bar{\mathscr{s}}(\cdot;t)$ is identified on $\mathcal{C}(x_t,\xi_t)$ for all $t$.
\end{theorem}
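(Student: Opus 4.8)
The plan is to assemble the conditional demand function directly from the pieces already identified, using the representation (\ref{eq cond demand}) together with Lemmas \ref{thm g identified} and \ref{lem residuals identified}. Fix a market $t$, with realized $(x_t,\xi_t)$. Recall that $\bar{\mathscr{s}}(z,y^0(x_t),p;t)=\sigma\bigl(g(z,y^0(x_t),x_t)+h(x_t,\xi_t),y^0(x_t),p,x_t\bigr)$ for $(z,p)\in\mathcal{C}(x_t,\xi_t)$ (with $Y_{it}$ fixed at $y^0(x_t)$; the argument for other values of $y$ is identical, replacing $y^0(x)$-specific objects by their $y$-analogues throughout). The key point is that every object appearing on the right-hand side is now known: $g(\cdot,y^0(x_t),x_t)$ is identified on $\mathcal{Z}(y^0(x_t),x_t)$ by Lemma \ref{thm g identified}; the scalar shifters $h_j(x_t,\xi_{jt})$ are identified for every $j$ by Lemma \ref{lem residuals identified}; and, for each $(z,p)$ in the relevant support, the composition $\sigma(\,\cdot\,,y^0(x_t),p,x_t)$ evaluated at the index point $g(z,y^0(x_t),x_t)+h(x_t,\xi_t)$ is exactly the observed conditional choice probability $\Pr(Q_{it}=\cdot\mid Z_{it}=z,Y_{it}=y^0(x_t),P_t=p,X_t=x_t)$.

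So the steps I would carry out are: (1) state that it suffices to identify $\bar{\mathscr{s}}(z,y,p;t)$ for each $(z,y,p)\in\mathcal{C}(x_t,\xi_t)$; (2) invoke Lemma \ref{lem residuals identified} to treat $h(x_t,\xi_t)=(h_1(x_t,\xi_{1t}),\dots,h_J(x_t,\xi_{Jt}))$ as known; (3) invoke Lemma \ref{thm g identified} to treat $z\mapsto g(z,y,x_t)$ as known on $\mathcal{Z}(y,x_t)$; (4) observe that the map $(z,p)\mapsto\sigma\bigl(g(z,y,x_t)+h(x_t,\xi_t),y,p,x_t\bigr)$ is, by definition of demand, the conditional choice probability vector observed in the data for market $t$ at $(Z_{it},Y_{it},P_t)=(z,y,p)$; (5) conclude that $\bar{\mathscr{s}}(z,y,p;t)$ is identified at every point of its domain. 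The argument is essentially a bookkeeping assembly: no new instrument conditions, no new continuity arguments.

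I do not expect a genuine obstacle here — this is the ``follows easily'' payoff theorem flagged in the text — but the one place requiring a moment of care is making sure the domains line up. Specifically, one must check that for $(z,y,p)\in\mathcal{C}(x_t,\xi_t)$ the corresponding choice probability vector $s=\bar{\mathscr{s}}(z,y,p;t)$ lies in $\mathcal{S}(p,x_t,\xi_t)$ so that the identified pieces are indeed the ones being composed; this is immediate from the definition (\ref{eq script S}) of $\mathcal{S}$ and the fact that $z\in\mathcal{Z}(y^0(x_t),x_t)$ (extended to general $y$ via Assumption \ref{ass smoothness}, whose injectivity/smoothness hypotheses we only need at the selected $y^0(x_t)$, as noted in footnote \ref{fn invariance of domain}, but which hold for all $y\in\mathcal{Y}(x)$ under the stated assumption). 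Once this matching is noted, identification of $\bar{\mathscr{s}}(\cdot;t)$ on all of $\mathcal{C}(x_t,\xi_t)$ is immediate, and since $t$ was arbitrary the claim holds for all $t$.
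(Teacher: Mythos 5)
Your assembly strategy is the right one, and steps (1)--(3) match the paper. But step (4) contains a genuine error, and it sits exactly at the nontrivial point of the theorem. You assert that $\sigma\bigl(g(z,y,x_t)+h(x_t,\xi_t),y,p,x_t\bigr)$ equals the observed conditional choice probability $\Pr\bigl(Q_{it}=\cdot\mid Z_{it}=z,Y_{it}=y,P_t=p,X_t=x_t\bigr)$. That identity is false for $p\neq p_t$: conditioning only on $(Z_{it},Y_{it},P_t,X_t)$ averages over the conditional distribution of $\Xi_t$ given $(P_t=p,X_t=x_t)$, which Assumption \ref{ass support xi} guarantees is \emph{non}degenerate. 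So that observable object is a mixture $E\bigl[\sigma(g(z,y,x_t)+h(x_t,\Xi_t),y,p,x_t)\mid P_t=p,X_t=x_t\bigr]$, not the demand of market $t$ at its particular $\xi_t$. Within market $t$ itself the identity does hold, but there $P_t$ is fixed at $p_t$, so you get $\bar{\mathscr{s}}(\cdot;t)$ only at the single realized price---whereas $\mathcal{C}(x_t,\xi_t)$ is the support of $(Z_{it},Y_{it},P_t)$ given $(x_t,\xi_t)$ and includes prices never observed in market $t$. Recovering the price response of a market whose price does not vary is the whole content of the theorem.

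The missing ingredient is to actually \emph{use} the identified values $h(x_{t'},\xi_{t'})$ as a conditioning variable, not merely to declare them known. The paper's proof writes
$\bar{\mathscr{s}}(Z_{it},Y_{it},P_t;t)=E\left[Q_{it}\mid Z_{it},Y_{it},P_t,x_t,h(x_t,\xi_t)\right]$,
i.e., it pools observations across all markets $t'$ with $X_{t'}=x_t$ and $h(x_{t'},\xi_{t'})=h(x_t,\xi_t)$; those markets can have different prices, and within that pool the demand shock is held fixed, so the conditional expectation is exactly $\sigma(g(\cdot,\cdot,x_t)+h(x_t,\xi_t),\cdot,\cdot,x_t)$ and is identified because every conditioning variable is either observed or (by Lemma \ref{lem residuals identified}) known. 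Equivalently, you could fix the argument by evaluating $\sigma(\gamma,y,p,x_t)$ at the known index $\gamma=g(z,y,x_t)+h(x_t,\xi_t)$ using data from a market $t'$ with $P_{t'}=p$, after solving $g(z',y,x_t)+h(x_t,\xi_{t'})=\gamma$ for the appropriate $z'$---but some such cross-market matching on the identified $h$ is indispensable, and your write-up omits it.
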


\begin{proof2}
Recall that
\begin{align*}
  \bar{\mathscr{s}}(Z_{it},Y_{it},P_t;t)&=\mathscr{s}(Z_{it},Y_{it},P_t,x_t,\xi_t)\\
                                        &= \sigma(g(Z_{it},Y_{it},x_t)+h(x_t,\xi_t),Y_{it},P_t,x_t)\\
                                        &= E\left[Q_{it}\vert Z_{it},Y_{it},P_t,x_t,h(x_t,\xi_t)\right].
\end{align*}
Because $Q_{it},Z_{it},Y_{it},P_t,X_t$ are observed and each $h(x_t,\xi_t)$ is known, the result follows.
\end{proof2}

We emphasize that although the conditional demand functions  $\bar{\mathscr{s}}(\cdot;t)$ are indexed by $t$, this merely stands in for the values of $X_t$ and $h(X_t,\Xi_t)$. Within a single market, there is no price variation. However, Lemma \ref{lem residuals identified} allows us to utilize information from all markets with same values of $X_t$ and $h(X_t,\Xi_t)$ to reveal how price variation affects demand at all  $(Z_{it},Y_{it},P_t,X_t,h(X_t,\Xi_t))$ in their joint support.

\subsection{Identification of Demand}\label{sec id demand}

As discussed already, knowledge of the conditional demand functions suffices for a large fraction of the questions motivating demand estimation, but not all. In particular, it is not sufficient to answer questions concerning effects of $X_t$ on demand or other counterfactual outcomes when $X_t$ changes holding $\Xi_t$ fixed. Addressing such questions will require separating the impacts of $X_t$ and $\Xi_t$. This can be done by adding the standard assumption that $X_t$ is exogenous.

\begin{assumption}[Exogenous Product Characteristics]\label{ass exog x}
  $E[\Xi_t|X_t]=0$.
\end{assumption}

When Assumption \ref{ass exog x} holds,  the definition (\ref{eq define h}) implies
$$h(X_t,\Xi_t)=\Xi_t.$$
This has two important implications.  First, when Assumption \ref{ass exog x} is maintained, the IV exclusion condition (part (i) of  Assumption \ref{ass IV}) softens to require instruments $W_t$ that are exogenous conditional on exogenous (rather than endogenous) $X_t$. Second, Lemma \ref{lem residuals identified} now implies that each realization $\xi_t$ of the demand shock vector is identified. Recalling that
\begin{equation*}
  \mathscr{s}(Z_{it},Y_{it},P_t,X_t,\Xi_t)= E\left[Q_{it}\vert Z_{it},Y_{it},P_t,X_t,\Xi_t\right],
\end{equation*}
identification of $\mathscr{s}$ follows immediately from the facts that $(Q_{it},Z_{it},Y_{it},P_t,X_t)$ are observed and all realizations of $\Xi_t$ are now known.

\begin{theorem}\label{thm demand}
Under Assumptions \ref{ass index}--\ref{ass exog x}, $\mathscr{s}$ is identified on $\mathcal{C}$.
\end{theorem}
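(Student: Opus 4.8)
The plan is to leverage Lemma \ref{lem residuals identified} together with the exogeneity restriction of Assumption \ref{ass exog x} to upgrade identification of the residuals $h_j(x_t,\xi_{jt})$ into identification of the realized demand shock vector $\xi_t$ in every market, after which $\mathscr{s}$ is read off directly as a conditional expectation.

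First I would invoke Assumption \ref{ass exog x}: since $E[\Xi_t|X_t]=0$, the definition (\ref{eq define h}) gives $h(X_t,\Xi_t)=\Xi_t-E[\Xi_t|X_t]=\Xi_t$. Hence the conclusion of Lemma \ref{lem residuals identified} --- that each scalar $h_j(x_t,\xi_{jt})$ is identified for all $j$ and $t$ (its hypotheses, Assumptions \ref{ass index}--\ref{ass completeness}, being subsumed here) --- is precisely the statement that every coordinate of $\xi_t$, and therefore $\xi_t$ itself, is identified in each market $t$. Note that this step does not require re-deriving $g(\cdot,y,x)$ at values of $y$ other than the reference value $y^0(x)$ used in the earlier lemmas; the sole purpose of those lemmas was to pin down the latent $\xi_t$.

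Second, with every realization $\xi_t$ now known, conditioning on the event $\{\Xi_t=\xi\}$ is the same as conditioning on the observed subpopulation of markets whose identified demand shock equals $\xi$. Recalling from (\ref{eq expected demand}) that $\mathscr{s}(C_{it})=E[Q_{it}|C_{it}]$ with $C_{it}=(Z_{it},Y_{it},P_t,X_t,\Xi_t)$, every argument of this conditional expectation is now observed or identified, so
\[
\mathscr{s}(z,y,p,x,\xi)=E\bigl[Q_{it}\mid Z_{it}=z,\ Y_{it}=y,\ P_t=p,\ X_t=x,\ \Xi_t=\xi\bigr]
\]
is identified at each $(z,y,p,x,\xi)$ in the support $\mathcal{C}$ of $C_{it}$, which is the claim.

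I expect the first step to carry essentially all the content; everything downstream is the elementary observation that a conditional expectation whose conditioning variables are all observed or identified is itself identified. The only subtlety worth flagging is definitional: because $\mathcal{C}$ is by construction the support of $C_{it}$, identifying $\xi_t$ market by market does recover $\mathscr{s}$ at every point of $\mathcal{C}$; and, in contrast to Theorem \ref{thm conditional demand}, no restriction to $Y_{it}=y^0(X_t)$ survives, since the conditional-expectation representation is valid at all values of $Y_{it}$.
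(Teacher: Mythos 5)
Your proposal is correct and follows essentially the same route as the paper: Assumption \ref{ass exog x} collapses $h(X_t,\Xi_t)$ to $\Xi_t$, Lemma \ref{lem residuals identified} then identifies each realization $\xi_t$, and $\mathscr{s}$ is recovered as the conditional expectation $E[Q_{it}|Z_{it},Y_{it},P_t,X_t,\Xi_t]$ with all conditioning variables observed or identified. No gaps.
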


\section{Discussion}\label{sec discuss}

The results above demonstrate nonparametric identification of demand
(and conditional demand) using a combination of within-market and
cross-market variation.  Compared to a setting with market-level data,
 micro data can (i) permit
demand specifications that condition on consumer-level observables, (ii) avoid  the need to  restrict  how market/product observables $X_t$ enter, and (iii) substantially reduce the reliance on instrumental
variables. 

The last of these may be especially important. The number of instruments needed is halved with micro data,
and there is no  need for the so-called ``BLP instruments.'' This softening of
instrumental variables requirements is achieved because consumer-level
observables create within-market variation in consumers' choice
problems. Such variation is similar in some ways to that which can be
generated by instruments for quantities (see \cite{Berry-Haile-market,Berry-Haile-HBK}).
However, the reason micro-data variation is free from confounding effects of variation in market-level demand shocks is not an assumed exclusion condition in the cross-section of markets but, rather, 
the fact that within a single market these shocks
simply do not vary. Thus, our insights here have a connection to
those underlying ``within'' identification of slope parameters in
panel data models with fixed effects.

The most important message from these results is that identification
of demand for differentiated products follows using the same sorts of
quasi-experimental variation relied upon in simpler settings.  Indeed,
the exploitation of within-unit variation and instrumental variables
are arguably the bread and butter of empirical economics. Of course,
these conclusions lead to several questions about appropriate
instruments, the potential for softening some conditions for
identification by strengthening others, and extensions of our results
to other types of demand models. We discuss these questions in the
remainder of this section. 

\subsection{What Are Appropriate Instruments?}\label{sec IV}

The fact that reliance on instruments is standard does not imply that
instruments will always be available. Rather, this merely shifts
discussion of identification largely to standard questions concerning
the availability of suitable instruments. What are likely instruments
in practice?

Candidate instruments for prices include most of those typically
relied upon in the case of market-level data (see
\cite*{Berry-Haile-annreview} for a more complete discussion of these candidate instruments). Classic
instruments for prices are cost shifters that are excluded from the
demand system and (mean-) independent of the demand shocks
$\Xi_t$. When cost shifters are not observed, proxies for cost
shifters may be available and can satisfy the required exclusion
conditions.\footnote{Examples of such proxies, plausibly exogenous in some
  applications, are so-called ``Hausman instruments,'' i.e., prices of
  the same good in other markets (e.g., \cite*{Hausman-Leonard-Zona},
  \cite*{Hausman96}, or Nevo (2000, 2001)).}\nocite{Nevo2000b}
Exogenous shifters of market structure (e.g., exogenous merger
activity or, in some cases, exogenous variation in common ownership) that affect prices through equilibrium markups can also
serve as instruments. 

Micro data can also result in availability of a
related category of candidate instruments: market-level observables (e.g., market-level demographic measures)
that alter equilibrium markups. Berry and Haile (2014, 2016) refer to
these as ``Waldfogel'' instruments, after
\cite*{Waldfogel_whom}.\footnote{See also \cite*{GentzkowShapiro},
  \cite*{Fan}, and \cite*{LiHartmannAmano}.}  When micro data are
available, we can directly account for the impacts of
individual-specific demographics, so it may be reasonable to assume
that market-level demographics are excluded from the conditional
demands we seek to identify. The requirement that these market-level
measures be mean independent of the market-level demand shocks is a
significant assumption, ruling out certain kinds of geographic sorting
or peer effects, for example. But in many applications such an
assumption may be natural.

The exclusion restriction
that defines an appropriate instrument (part (i) of Assumption
\ref{ass IV}), requires $W_t$ to be (mean) independent
of the structural error conditional on $X_t$. This
conditional independence assumption does not require exogenous
product/market characteristics $X_t$, but neither does it allow 
all models with endogenous $X_t$.  Making use of simple graphical causal
models, Appendix \ref{app DAGs} discusses
a variety of cases in which instruments for
prices remain valid under endogeneity of $X_t$. It also discusses the key
case leading to a failure of the exclusion restriction: when $X_t$ is chosen in response to
both $\Xi_t$ and $W_t$. In such cases, $W_t$ could instead serve as an instrument
for the endogenous components of $X_t$, but we would still need different
instruments for prices. In some cases, such instruments may be obtained through natural timing assumptions---e.g., using only the current-period innovations to input costs as the instruments.

Absent from the discussion of candidate instruments above are the ``BLP instruments''---characteristics of competing products.  These  play an essential role as instruments for quantities when one has only market-level data (\cite{Berry-Haile-market}). The relevant exclusion condition in that case requires not only exogeneity of certain product characteristics, but also restrictions on the way they enter demand. Micro data makes it possible to avoid these requirements, although adding them can allow use of BLP instruments for prices (see section \ref{sec stronger index}). Absent such additional assumptions, however, the BLP instruments are unavailable, even when $X_t$ is assumed exogenous. This can be seen in the key equation (\ref{eq regression}), where $x_t$ appears on the right-hand side for each $j$. Each element of $x_t$ ``instruments for itself'' in these equations, leaving no product characteristics excluded. 

\subsection{What About Stronger Functional Forms?}

In practice, estimation in finite samples is almost always influenced by functional form
assumptions---e.g., the choice of parametric structure, kernel functions, or sieve
basis. Such functional forms enable interpolation, extrapolation,
and bridging of gaps between the variation present in the sample
and that needed for nonparametric point identification. A study of nonparametric identification can reveal whether functional form assumptions play a more essential role in one precise sense. One interpretation of our
results is that only limited nonparametric structure is essential: beyond the nonparametric index structure, our main requirement for identification is adequate  variation through $(Z_t,W_t)$ of dimension equal to the dimension of the endogenous variables (prices and quantities).

But one can also ask how imposing additional structure on the demand
model might allow relaxation of our identification
requirements. Answers to this question may be of direct interest and
can also suggest the sensitivity of identification to particular
conditions. For example, we may feel more comfortable when we know
that identification is robust in the sense that a relaxation of one
condition for identification can be accommodated by strengthening
another.  A full exploration of these potential trade-offs describes
an entire research agenda. But some examples can illustrate three
directions one might go to enlarge the set of potential instruments,
further reduce the number of required instruments, or reduce the
required dimensionality of the micro data. 

For simplicity, our
discussion here will consider the typical case in which $X_t$ is
assumed exogenous, focusing then on identification of demand rather than
conditional demand. Recall that in this case we have
  $h(X_t,\Xi_t)=\Xi_t$. Given our focus on the role of $Z_{it}$, for simplicity we will fix and suppress any additional consumer-level 
  observables $Y_{it}$ in what follows.

\subsubsection{Strengthening the Index Structure}\label{sec stronger index}

Our model avoided any restriction on the way the observables $X_{t}$
enter demand. This contrasts with the structure used by  \cite{Berry-Haile-market} to consider identification with market-level data. There,  for each good $j$, one element of $X_{jt}$ was assumed to enter demand only through the $j$th element of the index vector. In practice, such an assumption is common.  And adding  such a restriction here  can introduce another class of potential
instruments: the exogenous characteristics of competing goods, i.e.,
  ``BLP instruments.''\footnote{The key question is the proper excludability of these instruments, which in general requires more than mean independence between $X_t^{(1)}$ and $\Xi_t$. The ``relevance'' of these measures as instruments for prices reflects the fact that in standard oligopoly
  models each good's markup depends on the characteristics of all substitutes or complements.} 
  
To illustrate this as simply as possible,  partition $X_t$ as 
$(X_t^{(1)},X_t^{(2)})$, where $$X_t^{(1)}=\left(X_{1t}^{(1)},\dots,X_{Jt}^{(1)}\right)\in \mathbb{R}^J.$$

Suppose demand takes the form
\begin{equation}\label{eq demand x in index}
\mathscr{s}\left( Z_{it},P_{t},X_{t},\Xi _{t}\right) =\sigma\left( \gamma\left(Z_{it},X_{t},\Xi _{t}\right) 
,P_{t},X^{(2)}_{t}\right),
\end{equation}
where for $j=1,\dots, J$
\begin{equation}
    \gamma_j\left(Z_{it},X_{t},\Xi _{t}\right)=g_j(Z_{ijt},X_t^{(2)})+\eta_j(X^{(1)}_{jt},X_t^{(2)})+\Xi _{jt}.
\end{equation} 

Compared to our original specification, here we (a) restrict $X_t^{(1)}$ to enter only through the index vector; (b) associate the $j$th components of $Z_{it}$ and $X_{t}^{(1)}$ exclusively with the $j$th element of the index vector; and (c) impose additive separability between $Z_{ijt}$ and $X_{jt}^{(1)}$ within each index.\footnote{Exclusivity of $X^{(1)}_{jt}$ to the index $\gamma_j$ is essential to the point we illustrate here, and this is most natural when exclusivity of each $Z_{ijt}$ differentiates the elements of the index vector. Part (c) substantially simplifies the exposition. As in our more general model, the elements of $\gamma(Z_{it},X_{t},\Xi _{t})$ need not be linked to particular goods.} This specification requires that each element of  $Z_{it}$ can be matched to an element of  $X_t^{(1)}$ that affects demand in a similar way. Many specifications in the literature satisfy this requirement, typically with additional restrictions such as linear substitution between $Z_{ijt}$ and $X_{jt}^{(1)}$. We will also strengthen the common choice probability condition to require existence of a common choice probability vector  $s^*(X_t)$ that does not vary with $X_t^{(1)}$.\footnote{Formally, we assume that for each $x^{(2)}\in \supp X_{t}^{(2)}$, there exists a choice probability vector $s^{\ast }(x^{(2)})$ such that for
all $x^{(1)}\in $ $\supp X_{t}^{(1)}|\{X_{t}^{(2)}=x^{(2)}\}$, 
$s^{\ast }(x^{(2)})\in \mathcal{S}(p,(x^{(1)},x^{(2)}),\xi )$ 
for all $(p,\xi)\in \text{supp}\,\left( P_{t},\Xi _{t}\right) |\{X_{t}=(x^{(1)},x^{(2)})\}$.} For simplicity we also assume that, for each $x^{(2)}$ in the support of $X_t^{(2)}$ there is some point $z^0(x^{(2)})$ common to $\mathcal{Z}(x,x^{(2)})$ for all $x$ in the support of $X_t^{(1)}|\{X_t^{(2)}=x^{(2)}\}$.

For the remainder of this section  we will condition on $X_{t}^{(2)}$ (treating it fully flexibly), suppress it from the notation,  and let $X_{t}$ represent $X_{t}^{(1)}$. 
Posing the identifications question here requires a different set of normalizations.\footnote{Those in section \ref{subsec normal} do not respect the exclusivity restrictions imposed here and, therefore, cannot be assumed without loss in this case.} These are standard location and scale normalizations. First, because adding a constant $\kappa_j$ to $g_j$ and subtracting the same constant from $\eta_j$ would leave the model unchanged, we take an arbitrary $x^0\in\mathcal{X}$ and set \begin{equation}\label{eq norm eta}
    \eta_j(x^0_j)=0 \quad \forall j.
\end{equation} 
Even with (\ref{eq norm eta}) (and our maintained $E[\Xi_t]=0$), it remains true that any linear (or other injective) transformation of the index $\gamma_j$ could offset by an appropriate adjustment to the function $\sigma$, yielding multiple representations of the same demand system (recall the related observation in section \ref{subsec normal}). Thus, without loss, we normalize the location and scale of each index $\gamma_j$ by setting
\begin{eqnarray*}
g_j\left( z_j^{0}\right)  &=&0 \\
\frac{\partial g_j\left( z_j^{0}\right) }{\partial z_j}&=&1.
\end{eqnarray*}

The arguments in Lemmas  \ref{lem open ball}--\ref{thm g identified} now demonstrate identification of each function $g_j$. At the common choice probability vector $s^*$, the inverted demand system takes the form of equations
\[
g_{j}\left( z_{j}^{\ast }\left( s^{\ast }\right) \right) +\eta _{j}\left(
x_{jt}\right) +\xi _{jt}=\sigma _{j}^{-1}\left( s^{\ast };p_{t}\right) 
\]%
for each $j$. Writing the $j$th equation as
\begin{equation}\label{eq BLP IV}
g_{j}\left( z_{j}^{\ast }\left( s^{\ast }\right) \right) =-\eta _{j}\left(
x_{jt}\right) +\sigma _{j}^{-1}\left( s^{\ast };p_{t}\right) -\xi _{jt},
\end{equation}
we obtain a nonparametric regression equation with RHS variables $x_{jt}$ and $p_t$. Here $x_{-jt}$ is excluded, offering $J-1$ potential instruments for the endogenous prices $p_t$.  Thus,  one additional instrument---e.g., a scalar market-level cost shifter or Waldfogel instrument---would yield enough instruments to  obtain identification of the unknown RHS functions and  the ``residuals'' $\xi _{jt}$.\footnote{Here  the  separability in $X_{jt}$  provides a falsifiable  restriction.}
As before, once these demand shocks are identified, identification of demand follows immediately.%

Many variations on this structure are possible.  For example, as in many empirical specifications, one might assume that  $p_{jt}$ enters demand only through the $j^{th}$ index. This can lead to a regression equation (the analog of (\ref{eq BLP IV})) of the form 
\[
g_{j}\left( z_{j}^{\ast }\left( s^{\ast }\right) \right) =-\eta _{j}\left(
x_{jt},p_{jt}\right) +\sigma _{j}^{-1}\left( s^{\ast}\right) -\xi _{jt}.
\]
Now only one instrument for price is necessary. For example, the BLP instruments can overidentify demand.

\subsubsection{A Nonparametric Special Regressor}\label{sec special reg}

A different approach is to assume that the
demand system of interest is generated by a random utility model with
conditional indirect utilities of the form
\[
U_{ijt}=g_{j}(Z_{ijt})+\Xi _{jt}+\mathcal{E}_{ijt},
\]%
where $\mathcal{E}_{ijt}$ is a scalar random variable whose nonparametric distribution
depends on $X_{jt}$ and $P_{jt}$ (equation (\ref{eq def mu}) gives a parametric example). In this case, our Lemma \ref{thm g identified} demonstrates
identification of each function $g_{j}(\cdot)$ up to a normalization of utilities.

If one is willing to add the assumption of independence between $Z_{ijt}$ and $\mathcal{E}_{ijt}$, this turns $g_{j}(Z_{ijt})$ into
a known special regressor. Under a further  (typically very restrictive) large support
assumption on $g_{j}(Z_{j})$, a standard argument demonstrates identification of the marginal
distribution of $(\Xi_{jt}+\mathcal{E}_{ijt})|(X_t,P_t)$. This is not sufficient to identify demand. However,
one can use these
marginal distributions to define a nonparametric IV regression equation for
each choice $j$, where the LHS is a conditional mean and $\Xi_{jt}$ appears on the RHS as an additive structural error.\footnote{See our earlier working paper, \cite*{BerryHaileNPold}.}  In each of these equations the prices and characteristics of
goods $k\neq j$ are excluded. Identification of these equations identifies all demand shocks, and identification of demand then follows as in Theorem \ref{thm demand}.  Thus, in this framework one needs only one
instrument for price, and exogenous characteristics of competing
goods (BLP IVs) would be available as instruments.

\subsubsection{A Semiparametric Model}\label{sec semi nested}

Moving further in the direction of parametric models commonly used in practice can  reduce both the required dimensionality of consumer attributes and the number of required instruments. 
As one example, consider a semi-parametric nested logit model. We condition on (and suppress from the notation) $X_t$,%
\footnote{%
By conditioning on  $X_{t}$, we permit it to enter the model
fully flexibly. Let conditional indirect utilities take the form
\[
u_{ijt}=u\left( x_{t},g_{j}\left( z_{it},x_{t}\right) +\xi _{jt}-\alpha
\left( x_{t}\right) p_{jt}+\mu _{ijt}\left( x_{t}\right) \right) ,
\]%
where $u$ is strictly increasing in its second argument, $\alpha \left(
x_{t}\right) $ is arbitrary, and $\mu _{ijt}\left( x_{t}\right) $ is a
stochastic component taking the standard composite nested-logit form at each
$x_{t}$. The identification argument sketched here may be repeated at each $x_t$.} and consider a semiparametric nested logit model where inverse demand  in  market $t$, given $z_{it}$, is
\begin{equation}
g_{j}(z_{it})+\xi _{jt}=\ln (s_{jt}(z_{it})/s_{0t}(z_{it}))-\theta \ln
(s_{j/n,t}(z_{it}))+\alpha p_{jt}.  \label{eq nl inverse}
\end{equation}%
Here $s_{jt}(z_{it})$ denotes good $j$'s observed choice probability in market $t$ conditional on $z_{it}$, and $s_{j/n,t}(z_{it})$ denotes its within-nest conditional choice probability. The scalar $\theta $ denotes
the usual \textquotedblleft nesting parameter.\textquotedblright\ Here we allow $z_{it}$ to have fewer than $J$ elements.

As with the standard representation of most parametric models of inverse demand, the  nested logit model embeds   normalizations of the indices and demand function analogous to our choices of $A(x)$ and $B(x)$ in section \ref{subsec normal}. However, we must still normalize the location of either $\Xi_{jt}$ or $g_j$ for each $j$ to pose the identification question. Here we will set $g_j(z^0) = 0$ for all $j$, breaking with our prior convention by leaving each $E[\Xi_{jt}]$ free.

 Take any market $t$ and any $z\in $ $\mathcal{Z}$. Differentiating
\eqref{eq
nl inverse} with respect to one (possibly, the only) element of $z_{it}$%
---say $z^{(1)}_{it}$---at the point $z$ yields
\begin{equation}
\frac{\partial g_{j}(z)}{\partial z^{(1)}}=\frac{\partial \ln s_{jt}(z)}{%
\partial z^{(1)}}-\frac{\partial \ln s_{0t}(z)}{\partial z^{(1)}}-\theta \frac{%
\partial \ln s_{j/nt}(z)}{\partial z^{(1)}}.  \label{eq nl diff}
\end{equation}%
In this equation,$\frac{\partial g_{j}(z)}{\partial z^{(1)}}$ and $\theta $
are the only unknowns. Moving to another market $t^{\prime }$, we can obtain
a second equation of the same form in which the left-hand side is identical to that in (%
\ref{eq nl diff}). Equating the right-hand sides yields
\[
\frac{\partial \ln s_{jt}(z)}{\partial z^{(1)}}-\frac{\partial \ln s_{0t}(z)}{%
\partial z^{(1)}}-\theta \frac{\partial \ln s_{j/nt}(z)}{\partial z^{(1)}}=\frac{%
\partial \ln s_{jt^{\prime }}(z)}{\partial z^{(1)}}-\frac{\partial \ln
s_{0t^{\prime }}(z)}{\partial z^{(1)}}-\theta \frac{\partial \ln
s_{j/nt^{\prime }}(z)}{\partial z^{(1)}}.
\]%
Thus, we can solve for $\theta $ as long as%
\[
\frac{\partial \ln s_{j/nt}(z)}{\partial z^{(1)}}\neq \frac{\partial \ln
s_{j/nt^{\prime }}(z)}{\partial z^{(1)}},
\]%
a condition that will typically hold when $\xi _{t^{\prime }}\neq \xi _{t}$
or $p_{t^{\prime }}\neq p_{t}$, and which is directly observed. With $\theta
$ known, we then identify (indeed, over-identify) all derivatives of $%
g_{j}(z)$ from (\ref{eq nl diff}), yielding identification of the function $g$ as in Lemma \ref{thm g identified}. Identification of the remaining parameter
$\alpha $ can then be obtained from the ``regression'' equation 
\begin{equation}
g_{j}(z_{it})=\ln (s_{jt}(z_{it})/s_{0t}(z_{it}))-\theta \ln
(s_{j/n,t}(z_{it}))+\alpha p_{jt}-\xi _{jt}, 
\end{equation}%
obtained from (\ref{eq nl inverse}), using a single
excluded instrument---e.g., an excluded exogenous market-level cost shifter or markup shifter that affects all prices. The constant recovered in this regression represents $E[\Xi_{jt}]$. 

Although this example involves a model that is more flexible than
nested logit models typically estimated in practice, it moves a considerable
distance from our fully nonparametric model. But this example makes clear that additional structure can further reduce the dimension of the required exogenous variation. Indeed, here we can obtain
identification with a single instrument and a
scalar individual-level observable $z_{it}$. This compares to the usual requirement of
two instruments for the fully parametric nested logit when one has only market-level data (see \cite*{Berry94}).  Other semiparametric models may offer more intermediate points in the set of feasible trade-offs between the flexibility of the model and the dimension of exogenous variation needed for identification.

\subsection{What about Continuous Demand Systems?}

Although we have focused on the case in which the consumer-level
quantities $Q_{ijt}$ are those arising from a discrete choice model,
nothing in our proofs requires this. In other settings, the demand
function $\mathscr{s}$ defined in (\ref{eq expected demand}) may simply be
reinterpreted as the expected vector of quantities demanded
conditional on $(X_t,P_t,\Xi_t,Z_{it},Y_{it})$.%
\footnote{Note that the demand faced by firms in market $t$ is the
  expectation of this expected demand over the joint distribution of
  $(Z_{it},Y_{it})$ in the market.} %
Applying our results to continuous demand is therefore
just a matter of verifying the suitability of our assumptions.\footnote{\cite*{BerryGandhiHaile} describe a broad class of continuous choice models
that can satisfy the key injectivity property of Assumption \ref{ass
invertible demand}. These can include mixed continuous/discrete
settings, where individual consumers may purchase zero or any positive
quantity of each good.}

As one possibility, consider a \textquotedblleft mixed CES\textquotedblright\
model of continuous choice, similar to the model in \cite*{AdaoCostDonaldson17}, with $J+1$ products. Here we reintroduce $Y_{it}$ to denote consumer $i$'s income, measured in units of the numeraire good $0$. Each consumer $i$ in market $t$ has utility
over consumption vectors $q\in \mathbb{R}_{+}^{J+1}$ given by
\[
u\left( q;z_{it},x_t,p_t,\xi_t \right) =\left( \sum_{j=0}^{J}\phi_{ijt}q_{j}^{\rho}\right) ^{1/\rho },
\]%
where $\rho \in \left( 0,1\right) $ is a parameter and each $\phi _{ijt}$
represents idiosyncratic preferences of consumer $i$. Normalizing $\phi _{i0t}=1$, let
\[
\phi _{ijt}=\exp \left[ \left( 1-\rho \right) \left( g_{j}\left(
z_{it},x_t\right) +\xi_{jt}+x_{jt}\beta_{it}\right) \right] \text{, }j=1,\dots ,J,
\]%
where $\beta _{it}$ is a random vector with distribution $F$ representing consumer-level preferences for product
characteristics. With $p_{0t}=1$, familiar
CES algebra shows that Marshallian demands are
\begin{equation}\label{eq CES demand}
q_{ijt}=\frac{y_{it}\exp \left( g_{j}\left( z_{it},x_t\right) +\xi
_{jt}+x_{jt}\beta _{it}-\alpha \ln (p_{jt})\right) }{1+\left[
\sum_{k=1}^{J}\exp \left( g_{k}\left( z_{it},x_t\right) +\xi _{kt}+x_{kt}\beta_{it}-\alpha \rho \ln (p_{kt})\right) \right] },
\end{equation}
where $\alpha=\slashfrac{1}{(1-\rho)}$. Equation (\ref{eq CES demand}) resembles a choice probability for a random coefficients logit model, although
the quantities $q_{it}$ here take on continuous values and do not sum to one. It is easy to show that our Assumptions \ref{ass index}--\ref{ass
linear index} are satisfied for the expected CES demand functions, which
take the form
\[
\sigma _{t}(g(z_{it},x_t)+\xi _{t},y_{it},x_t, p_{t})=E\left[Q_{it}\vert z_{it},y_{it},p_{t},x_t,\xi _{t}\right],
\]%
where the $j$th component of $E[Q_{it}|z_{it},y_{it},x_t,p_{t},\xi _{t}]$ is
\begin{equation*}
\int \frac{y_{it}\exp \left( g_{j}\left( z_{it},x_t\right) +\xi _{jt}+x_{jt}\beta_{it}-\alpha \ln (p_{jt})\right) }{1+\left[ \sum_{k=1}^{J}\exp
\left( g_{k}\left( z_{it},x_t\right) +\xi _{kt}+x_{kt}\beta_{it}-\alpha \rho \ln (p_{kt})\right) \right] }\,dF(\beta_{it}).
\end{equation*}

\section{Lessons for Applied Work}\label{sec lessons}

Although the study of identification is formally a theoretical exercise, a primary motivation for our analysis is to provide guidance for the practice and evaluation of demand estimation in applied work.  Here we discuss some key messages.

\subsection{The Incremental Value of Micro Data}

The most important practical lesson from our results is that the marginal value of micro data is high. It is not surprising that a setting allowing one to exploit variation both across markets and within markets is more informative than one with only cross-market variation. But the specific benefits of micro data concern some of the most significant challenges to identification of demand when one has only market-level data: (i) the need to instrument for all prices and quantities, and (ii) the nonparametric functional form and exogeneity conditions that allow these IV requirements to be satisfied. We have shown that adding micro data can eliminate the need to instrument for quantities and, therefore, the necessary reliance on BLP instruments. This, in turn, avoids the need for any restriction on the way observables at the level of the product and market enter the model. Furthermore, our results on identification of conditional demand imply that one can often obtain price elasticities without any exogenous product characteristics, much less the use of such characteristics as instruments. 

These are significant advantages. Researchers should, therefore, not only prefer micro data, but should seek it out whenever possible. Of course, even when the setting and assumptions permit use of BLP instruments---or when the micro data available are more limited than we have assumed to explore fully nonparametric identification---variation from micro data can be powerful. This message is consistent, for example, with the findings in the empirical literature (e.g., \cite{Petrin02}) that the addition of even limited micro data often results in much more precise estimates than those obtained with market-level data alone.

\subsection{The Necessity of Cross-Market Variation}

Another important lesson from our work concerns the need for cross-market variation,
even when one has micro data. Variation within a
single market cannot suffice for identification, at least without additional
assumptions.

Formally, our proofs relied on cross-market variation, even for identification of the function $g(\cdot,y^0(x),x)$ (see section \ref{sec id index}).  But the necessity of cross-market variation is also easy to see. In a single market the observables consist of conditional choice
 probability vectors  $s\left( z_{i},y_{i}\right)$ at all
$\left(z_{i},y_{i}\right)\in \Omega$---here we will suppress the index $t$ as well as the
observables $(X_{t},P_{t})$, since these have no variation in a single market.
Consider an arbitrary (and, thus, typically mis-specified) invertible parametric demand function $\sigma\left(g(z_{i},y_{i})+\xi ;\theta \right)$ that maps $J$ indices $g_j(z_{i},y_{i})+\xi_j$ to market shares. For concreteness, suppose this is a nested logit model with  ``mean utilities'' $g_j(z_{i},y_{i})+\xi_j$ and nesting parameter(s) $\theta$. By standard results (see \cite{Berry94}), given any
value of $\theta $, this model can fit the data in the
market perfectly by setting $\xi_j =0$ and
$$
g\left( z_{i},y_{i}\right) =\sigma ^{-1}\left( s\left( z_{i},y_{i}\right);\theta \right).
$$
This yields a different function $g$ for every candidate value of $\theta$, and no value of $\theta $
can be ruled out. Thus, the observables from a single market cannot identify the nesting parameters in this semiparametric nested logit model, much less determine whether the nested logit structure is correct.

It is also easy to see here how having micro data in multiple markets can help.
When the same demand model is assumed to apply to multiple markets,
the same $(g,\theta)$ pair must fit the data in each market. The
resulting restrictions can rule out incorrect candidates for $\theta$
and $g$, as we have seen in section \ref{sec semi nested}.\footnote{As
  suggested in section \ref{sec semi nested},  with a scalar $\theta$
  defining substitution between products in response to changes in the
  index vector, two markets may suffice. Our Lemmas 1--3 show how the
  restrictions across many markets allow identification when these
  substitution patterns are nonparametric.}

Of course, although we have suppressed the price vector $P_t$ when talking about single market, the effects of price variation on quantities demanded are essential. Because price vectors are typically fixed within markets by definition, exogenous sources of cross-market price variation will be needed.
Thus, even in the presence of micro data there are at least two
reasons applied researchers
should seek out data on multiple markets. First, a combination of within-market and cross-market
variation is needed to identify flexibly-specified effects of consumer
observables (including the function $g$). Second, cross-market variation through instruments for prices is essential for learning how demand responds to price variation.

These observations also serve as a caution. As a practical matter, with a fully parametric specification of demand it will often be possible to estimate all parameters with data from only one market.  And in some cases, only a single market is available for study. The classic work of \cite{McFadden-Trans} offers one example. However, identification
in such cases will implicitly rely on functional form restrictions---restrictions that could be
relaxed in a multi-market setting.

Our findings on the theoretical importance of cross-market variation can be linked to the practical findings of \cite*{MicroBLP}, who reported that when using only consumer-level variation---no cross-market variation or ``second choice'' data\footnote{ \cite*{Berry-Haile-HBK} discuss the close relationship between second-choice data and micro data from two markets.}---their attempts to
  estimate random coefficients logit models
  failed due to a nearly flat objective function. They speculated (p. 90) that
  ``in applications to other data sets, variation in the choice set
  (either over time or across markets) might provide the information
  necessary to estimate the random coefficients.'' Our results provide a  nonparametric confirmation of
  that conjecture, again pointing to the practical value of data that combine within- and cross-market variation.

\subsection{What Does Not Follow}

Although nonparametric identification results can offer important
insights, they address a very specific question about what can be
learned from data. A nonparametric identification result can
demonstrate a particular sense in which parametric assumptions are not
essential. But this does not mean that parametric (or other)
assumptions relied on in practice can be ignored. The choice of
finite-sample approximation method can of course matter. In the case
of demand estimation, functional form restrictions used in practice
restrict the families of demand functions considered in a way that can
constrain the answers to key questions. Thus, sensitivity of estimates
(most importantly, estimates of the quantitative answers to the
economic questions of ultimate interest) to functional form choices
remains an important issue for empirical researchers to explore. Likewise, it remains important to explore new (parametric, semiparametric, or nonparametric)
estimation approaches. Our nonparametric identification results ensure
that such explorations are possible and may even suggest new estimation strategies.

We also emphasize that our sufficient conditions for nonparametric
identification should not be viewed as necessary conditions, formally
or informally, for demand estimation in practice. Nonparametric identification
results should guide our thinking about the strength of the available
data and empirical results. But it would be a mistake to view these as
conditions that must be confirmed before proceeding with empirical work. Nonparametric identification of most
models in economics (even regression models) relies on
assumptions---index assumptions, separability assumptions,
completeness conditions, support conditions, monotonicity conditions,
or other shape restrictions---that will often (perhaps typically) fall
short of full satisfaction in practice. Conditions for nonparametric
identification are not a hurdle but an ideal---a point of reference
that can guide our quest for and aid our assessment of the best
available empirical evidence.

\section{Conclusion}\label{sec conclusion}

Since \cite*{BLP}, there has been an explosion of interest in
estimation of demand models that incorporate both flexible
substitution patterns and explicit treatment of the demand shocks that
introduce endogeneity/simultaneity. Understandably,
this development has been accompanied by questions about
what allows identification of these models. Our results,  here and in \cite{Berry-Haile-market}, offer a reassurance that identification follows
from traditional sources of quasi-experimental variation in the form
of instrumental variables and panel-style within-market
variation. This reassurance is particularly important because of the
wide relevance of these models to economic questions and the depth of the identification
challenge in the context of demand systems---notably, the fact that
even purely exogenous variation in prices is generally not sufficient to
identify price elasticities or other essential features of demand.\footnote{See the discussion in
  \cite*{Berry-Haile-HBK}.}

Furthermore, identification of these models is not
fragile. Identification does not rely on
``identification-at-infinity''  arguments; it is not limited to
particular types of settings (e.g., random utility discrete choice);
one can substitute one type of variation for another (e.g., replacing instruments for quantities with micro-data variation), depending on the
type of data available; and one can relax some key
conditions by strengthening others. Thus, although this is a case in
which identification results come well after an extensive empirical
literature has already developed, the nonparametric foundation for
this literature is strong.

\appendix

\usetikzlibrary{shapes,decorations,arrows,calc,arrows.meta,fit,positioning}
\tikzset{
    -Latex,auto,node distance =1 cm and 1 cm,semithick,
    state/.style ={ellipse, draw, minimum width = 0.7 cm},
    point/.style = {circle, draw, inner sep=0.04cm,fill,node contents={}},
    bidirected/.style={Latex-Latex,dashed},
    el/.style = {inner sep=2pt, align=left, sloped}
}

\numberwithin{equation}{section}
\appendixpage

\section{Excludability Conditional on \\ Endogenous Product Characteristics}\label{app DAGs}

In section \ref{sec IV} we discussed several categories of instruments
$W_t$ commonly relied upon to provide exogenous variation in prices.
Here we discuss the question of when such instruments remain properly
excluded even when conditioning on observables $X_t$ that are not
independent of $\Xi_t$.  Such instruments are required for Theorem
\ref{thm conditional demand} to apply when Theorem \ref{thm demand}
does not, allowing identification of conditional demand without
requiring exogeneity of $X_t$ (or instruments for $X_t$).

In what follows we suppress the market subscripts $t$ on the random variables
$X_t, W_t,  \Xi_t$, etc.  
Our discussion will utilize graphical causal models, with d\nobreakdash-separation providing the key criterion for
assessing the independence between $W$ and $\Xi$ conditional on $X$.%
\footnote{See, e.g., \cite*{Pearl-Causality} and \cite*{Pearl_etal_Primer}, including references therein. Throughout we maintain the standard assumption that nodes in the causal directed acylic graphs are independent of their nondescendants conditional on their parents.} %
Our use of these tools is elementary. However, the graphical approach allows transparent treatment of many possible economic examples
inducing a smaller number of canonical dependence structures. It also can be
highly clarifying when one ventures beyond the simplest
cases. Following the literature on graphical causal models, we focus on full
conditional independence,
\begin{equation}\label{eq cond indep}
W \Ind \Xi \,\vert X,
\end{equation}
which of course implies the conditional mean independence required by Theorem \ref{thm conditional demand}.

Below we first discuss several causal graphs (and motivating economic examples) that ``work''---i.e., that imply the conditional independence condition (\ref{eq cond indep}). We then discuss the main type of structure that does not work---i.e., in which  (\ref{eq cond indep}) fails despite unconditional independence between $W$ and $\Xi$.
We will see that each type of instrument discussed in section \ref{sec IV}  can remain valid under several models of endogenous $X$.  However,  each type of instrument can also fail; in particular,  (\ref{eq cond indep}) will fail despite unconditional independence between $W$ and $\Xi$ when firms choose $X$ in ways that depend on both $W$ and $\Xi$ (or their ancestors). However, in these situations,  a natural timing assumption can often yield a new set of valid instruments for prices.

\subsection{Graphs that Work}

\newcommand{\pa}{\textrm{pa}}

\subsubsection{Fully Exogenous Instruments}\label{sec fully exog}

The simplest cases arise when the instruments $W$ satisfy
\begin{equation}\label{eq fully exog}
  W \Ind  (X,\Xi).
\end{equation}
The conditional independence condition (\ref{eq cond indep}) is then immediate, regardless of any dependence between $X$ and $\Xi$.

For example, suppose $X$ is chosen by firms with knowledge of $\Xi$,
so that $X$ is endogenous in the same sense that prices are.  Given
(\ref{eq fully exog}), one obtains the causal graph shown
in Figure \ref{fig exoga}.%
\footnote{\label{fn ancestral}We assume throughout that prices
  and quantities are not among the ancestors of $(X,W,\Xi)$ and,
  therefore, typically exclude them from the graphs without loss when examining
  the properties of the joint distribution of
  $(X,W,\Xi)$. %
  Exclusion of prices and quantities from the ancestors
  of $(X,W,\Xi)$ is implied by standard assumptions that consumers take
  $X$ and $\Xi$ as given when making purchase decisions,  that $W$
  does not respond to prices or quantities, and that prices are not chosen before $X$. Note that in the case of simultaneously chosen $X$ and $P$, the parents of $X$ and $P$ will include only previously determined variables (namely, those entering the reduced forms for $X$ and $P$); thus, neither $X$ not $P$ will be an ancestor of the other. We include an example of simultaneous determination below.} 
The conditional
independence condition (\ref{eq cond indep}) then can also be seen to
follow immediately by the d-separation criterion. We of course reach the same conclusion if the direction of
causation between $\Xi$ and $X$ is reversed, as in Figure \ref{fig
  exogb}---e.g., if $X$ is chosen without knowledge of $\Xi$ but the
distribution of $\Xi$ changes with the choice of $X$. Taking the
canonical example of demand for automobiles, a manufacturer's choice
to offer a fuel efficient hybrid sedan  may imply a very
different set of (or response to) relevant unobserved characteristics
than had a pickup truck  or luxury SUV been offered
instead.

\begin{figure}[h!]\caption{}
\label{fig exoga}
\centering
\begin{tikzpicture}
    \node[state] (xi) at (0,0) {$\Xi$};

    \node[state] (x) [right =of xi] {$X$};
    \node[state] (w) [right =of x] {$W$};

    \path (xi) edge (x);

\end{tikzpicture}
\end{figure}

\begin{figure}[h!]\caption{}
\label{fig exogb}
\centering
\begin{tikzpicture}
    \node[state] (xi) at (0,0) {$\Xi$};

    \node[state] (x) [right =of xi] {$X$};
    \node[state] (w) [right =of x] {$W$};

    \path (x) edge (xi);

\end{tikzpicture}
\end{figure}

A similar structure is obtained when dependence between $\Xi$ and $X$ reflects a common cause (which could be latent). This is illustrated in Figure \ref{fig lag fork}.  For example, the common cause $V$  might represent past values of demand shocks, which are predictive of the current shocks $\Xi$ and  are  determinants of firms' choices of product characteristics $X$.

\begin{figure}[h!]\caption{}\label{fig lag fork}
\centering
\begin{tikzpicture}
    \node[state] (xi) at (0,0) {$\Xi$};

    \node[state] (x) [right =of xi] {$X$};
    \node[state] (w) [right =of x] {$W$};
    \node [state] (v) [above = of xi] {$V$};

    \path (v) edge (xi);
    \path (v) edge (x);

\end{tikzpicture}
\end{figure}

All the instrument types discussed in section \ref{sec IV} can  satisfy  (\ref{eq fully exog}).  For example, $W$ could represent cost shifters such as input price shocks, realized after $X$ is chosen and not affected by $X$. These might be shocks to import tariffs; shipping costs; retailer costs (e.g., rents, wages); or prices of manufacturing inputs.

One can also obtain this structure when $W$ represents exogenous shifters of markups. Mergers (full or partial) that are independent of $\Xi$ and leave product offerings unchanged offer one possibility.  Another is cross-market variation in the distribution $F_{YZ}(\cdot|t)$ (or other aggregate demographic measure at the market or regional level), as long as this variation is independent of $\Xi$ (as required generally for the validity of Waldfogel instruments) and $X$.

Proxies for cost shifters can also satisfy (\ref{eq fully exog}). Proxies lead to slightly different causal graphs, as illustrated in Figure \ref{fig proxy ok}.   There $L$ represents latent cost shifters known by firms when setting prices but not affecting $X$. The instruments $W$ are proxies for $L$.  For example, $W$ could represent Hausman instruments when we have both (a) the latent product-level cost shifters $L$ are independent of $X$, and (b) for each market, $\Xi$ and $X$ are independent of prices in other markets.  Condition (b) may often fail, since prices in other markets depend on the product characteristics in those other markets, and the characteristics of a given product will typically be highly correlated across markets. Fortunately, independence between $X$ and $W$ is not required, as we discuss next.

\begin{figure}[h!]\caption{}\label{fig proxy ok}
\centering
\begin{tikzpicture}
    \node[state] (xi) at (0,0) {$\Xi$};

    \node[state] (x) [right =of xi] {$X$};
    \node[state] (l) [right =of x] {$L$};
    \node[state] (w) [below =of l] {$W$};

    \path (xi) edge (x);
    \path  (l) edge (w);

\end{tikzpicture}
\end{figure}

\FloatBarrier
\subsubsection{Instruments Caused by  X}

Independence between $X$ and $W$ is not required. For example,  consider the case in which (i) $X$ is chosen with knowledge of $\Xi$ and (ii) the choice of $X$ affects $W$. Then we can obtain the following  causal graph, in which (\ref{eq cond indep}) again holds by the d-separation criterion.

\begin{figure}[h!]\caption{}\label{fig x causes w}
\centering
\begin{tikzpicture}
    \node[state] (xi) at (0,0) {$\Xi$};

    \node[state] (x) [right =of xi] {$X$};
    \node[state] (w) [right =of x] {$W$};

    \path (xi) edge (x);
    \path (x) edge (w);

\end{tikzpicture}
\end{figure}

The case represented by Figure \ref{fig x causes w} allows additional examples of cost shifters beyond those discussed above. For example, suppose $X$ represents product characteristics affecting the level of  labor skill (or quality of another input)  required in production, while $W$ is the producer's average wage.  Alternatively, if producers have market power in input markets, input prices $W$ would be affected by  firms' choices of product characteristics $X$.
\begin{figure}[h!]\caption{}\label{fig proxy still ok}
\centering
\begin{tikzpicture}
    \node[state] (xi) at (0,0) {$\Xi$};

    \node[state] (x) [right =of xi] {$X$};
    \node[state] (l) [right =of x] {$L$};
    \node[state] (w) [below =of l] {$W$};
    \node[state] (xio) [right =of l]  {$\Xi_{-t}$};
    \node[state] (xo) [right =of xio] {$X_{-t}$};

    \path (xi) edge (x);
    \path (x) edge (l);
    \path  (l) edge (w);
    \path (xio) edge (w);
    \path (xo) edge (w);
    \path[bidirected] (x) edge[bend left=30] (xo);

\end{tikzpicture}
\end{figure}

Allowing dependence between $X$ and $W$  also broadens the applicability of Hausman instruments (or other proxies for latent cost shocks) by allowing the latent costs shocks to depend on $X$. This is illustrated in Figure \ref{fig proxy still ok}, where we have been explicit about the observed characteristics $X_{-t}$ and demand shocks $\Xi_{-t}$ of ``other markets,'' both of which affect the Hausman instruments $W$. Here we link $X$ and $X_{-t}$ with a dotted bidirectional edge to indicate (as usual) dependence through unmodeled common causes. Note that the absence of an edge linking $\Xi$ and $\Xi_{-t}$ reflects an essential assumption justifying Hausman instruments in general (i.e., even when $X$ is exogenous), as does the absence of an edge directly linking $L$ and $\Xi$. Note that here $W$ and $\Xi$ are not independent, but the conditional independence condition (\ref{eq cond indep}) is satisfied, implying the exclusion condition needed for identification.

Reversing the direction of causation between $X$ and $\Xi$ in Figure \ref{fig x causes w} or Figure \ref{fig proxy still ok} leads to the same conclusion. For example, in the first case we obtain structure in Figure \ref{fig fork}, where the required conditional independence condition is again immediate.   Examples generating this structure are similar to those just discussed (including the proxy variation), but with $\Xi$ now representing market-level shocks whose distribution responds to firms' choices of $X$.

\begin{figure}[h!]\caption{}\label{fig fork}
\centering
\begin{tikzpicture}
    \node[state] (xi) at (0,0) {$\Xi$};

    \node[state] (x) [right =of xi] {$X$};
    \node[state] (w) [right =of x] {$W$};

    \path (x) edge (xi);
    \path (x) edge (w);

\end{tikzpicture}
\end{figure}

\FloatBarrier
\subsubsection{X Caused by Instruments}

In some cases, the conditional independence condition (\ref{eq cond indep}) can hold even when $X$ is affected by $W$. Consider the causal graph in Figure \ref{fig x caused by w}. As an example motivating this structure, suppose $W$ is a product-level cost of producing a product feature measured by $X$, the latter chosen with knowledge of $W$ but before $\Xi$ (or any signal of its realization) is known.

\begin{figure}[h!]\caption{}\label{fig x caused by w}
\centering
\begin{tikzpicture}
    \node[state] (xi) at (0,0) {$\Xi$};

    \node[state] (x) [right =of xi] {$X$};
    \node[state] (w) [right =of x] {$W$};

    \path (x) edge (xi);
    \path (w) edge (x);

\end{tikzpicture}
\end{figure}

If the product-level cost shifters are latent, proxies---e.g., the same firm's choice of price or product characteristics in other markets---could play the role of $W$, again assuming that $X$ is chosen with knowledge of the cost shock but before $\Xi$ is realized. As usual with such proxies (i.e., even when one assumes $X$ to be exogenous), one must maintain an assumption that the demand shocks $\Xi$ are independent across markets.  Such examples are  illustrated by Figure \ref{x caused by l}.

\begin{figure}[h!]\caption{}\label{x caused by l}
\centering
\begin{tikzpicture}
    \node[state] (xi) at (0,0) {$\Xi$};

    \node[state] (x) [right =of xi] {$X$};
    \node[state] (l) [right =of x] {$L$};
    \node[state] (w) [below =of l] {$W$};
    \node[state] (xio) [right =of l]  {$\Xi_{-t}$};
    \node[state] (xo) [right =of xio] {$X_{-t}$};

    \path (x) edge (xi);
    \path (l) edge (x);
    \path  (l) edge (w);
    \path (xio) edge (w);
    \path (xo) edge (w);
    \path[bidirected] (x) edge[bend left=30] (xo);

\end{tikzpicture}
\end{figure}

\FloatBarrier
\subsection{Graphs that Don't Work: X is a Collider}

The conditional independence condition (\ref{eq cond indep}) fails when both $W$ and $\Xi$ affect $X$. This is illustrated in Figure \ref{fig collider}. In this case,  $X$ is a collider in the (undirected) path between $W$ and $\Xi$. Thus, although $\Xi$ and $W$ are independent,  (\ref{eq cond indep}) fails.

\begin{figure}[h!]\caption{}
\centering\label{fig collider}
\begin{tikzpicture}
    \node[state] (xi) at (0,0) {$\Xi$};

    \node[state] (x) [right =of xi] {$X$};
    \node[state] (w) [right =of x] {$W$};

    \path (xi) edge (x);
    \path (w) edge (x);

\end{tikzpicture}
\end{figure}

This structure arises whenever firms' choices of $X$ depend on both $W$ and $\Xi$. An example is when $W$ is a cost shifter affecting firms' choices of $X$, the latter also chosen with knowledge of $\Xi$. Another example is when $W$ is a market-level demographic measure or market structure measure (e.g., product ownership matrix) that, along with $\Xi$, influences firms' choices of  $X$.  As this discussion suggests, this structure arises in most cases in which $X$ and $P$ are chosen simultaneously, with knowledge of both $\Xi$ and $W$. Indeed,  Figure \ref{fig simult collider}, which now includes $P$, provides the  causal graph under such simultaneity. With this structure, Figure \ref{fig collider} is indeed the ancestral graph for $\Xi,X,W$ (see also footnote \ref{fn ancestral}).

\begin{figure}[h!]\caption{}\label{fig simult collider}
\centering
\begin{tikzpicture}
    \node[state] (xi) at (0,0) {$\Xi$};

    \node[state,draw=none]  (blank1) [right =of xi] {};
    \node[state] (w) [right =of blank1] {$W$};
    \node[state] (x) [below =of blank1] {$X$};
    \node[state] (P) [below =of x] {$P$};

    \path (xi) edge  (x);
    \path (w) edge (x);
    \path (xi) edge (P);
    \path (w) edge (P);

\end{tikzpicture}
\end{figure}
\newpage
A similar structure arises when the dependence between $\Xi$ and $X$ reflects a common cause $V$, as in Figure \ref{fig lag collider}.  In this case, $X$ is again a collider in the path between $\Xi$ and $W$.

\begin{figure}[h!]\caption{}\label{fig lag collider}
\centering
\begin{tikzpicture}
    \node[state] (xi) at (0,0) {$\Xi$};

    \node[state] (x) [right =of xi] {$X$};
    \node[state] (w) [right =of x] {$W$};
    \node [state] (v) [above = of xi] {$V$};

    \path (v) edge (xi);
    \path (w) edge (x);
    \path (v) edge (x);

\end{tikzpicture}
\end{figure}

We can also obtain this type of structure if $W$ is a proxy for a latent cost shifter affecting firms' choices of of $X$. Figure \ref{fig proxy collider} illustrates, letting $L$ represent latent cost shifters known (along with $\Xi$) by firms when choosing $X$, with $W$ denoting a proxy for this shifter---e.g., prices or product characteristics of the same firm in other markets.
\begin{figure}[h!]\caption{}\label{fig proxy collider}
\centering
\begin{tikzpicture}
    \node[state] (xi) at (0,0) {$\Xi$};

    \node[state] (x) [right =of xi] {$X$};
    \node[state] (l) [right =of x] {$L$};
    \node[state] (w) [below =of l] {$W$};

    \path (xi) edge (x);
    \path (l) edge (x);
    \path  (l) edge (w);

\end{tikzpicture}
\end{figure}

  Thus, just as there are cases in which each type of instrument discussed in section \ref{sec IV} remains valid when conditioning on on endogenous characteristics $X$, there are are other important cases in which (\ref{eq cond indep}) will fail. Given $W \Ind \Xi$, the key threat to the conditional independence condition (\ref{eq cond indep}) is a case in which $X$ responds both to  the structural errors $\Xi$ and to the candidate instrument $W$ (or to the latent factor that $W$ proxies). In such situations, identification will require different instruments for prices.
  
In many cases such instruments will be easily constructed under natural timing assumptions.  This is a topic we take up in the final section of this appendix. We also note that  when $X$ is a collider, $W$ provides a candidate instrument for $X$. Thus, the problem here can provide part of its own solution: when instruments for prices and the endogenous components of $X_t$ are available, our results extend immediately by expanding $P_t$ and $W_t$ to include all endogenous variables and all instruments, respectively.

\subsection{Averting Colliders: Sequential  Timing} 

The previous section describes a class of situations in which candidate instruments that would be properly excluded unconditional on $X$ would fail to be properly excluded conditional on $X$. A leading case is that of cost shifters (e.g., input prices) that, along with $\Xi$, partially determine firms' choices of product characteristics $X$.  However, in such cases one may be able to obtain valid instruments by exploiting the (typical) sequential timing of a firm's  decisions. For example, physical characteristics  of new automobiles sold in year $\tau$ will reflect design choices made well in advance---in particular, before the input costs for year-$\tau$ production are fully known. Pricing in year $\tau$, on the other hand, will typically take place after those costs are known. Such timing is common to many markets.
And, as in other contexts, the temporal separation of observable choices can offer an identification strategy.\footnote{Familiar examples in IO include  strategies used by \cite{OlleyPakes96}, \cite*{ACF2006}, and others in the literature on estimation of production functions.} Here, for example, even if product characteristics are chosen in response to demand shocks and expected input costs, current-period  \textit{innovations} to input costs can offer candidate instruments for prices.

To illustrate, we introduce the time superscript $\tau$ to all random variables. Let $M^{\tau}$ denote a vector of time-$\tau$ input prices and suppose that $M^{\tau}$ follows the stochastic process   
\begin{equation}
M^{\tau} = \Phi(M^{\tau-1}) + W^{\tau},    
\end{equation}
where  $\Phi$ is a possibly unknown function and $W^{\tau} \Ind (\Xi^{\tau},X^{\tau},M^{\tau-1})$. Given observability of $(M^{\tau},M^{\tau-1})$ in all markets, each $W^{\tau}$ is identified. 
Now suppose that  $X^{\tau}$ is chosen by firms in period $\tau-1$, whereas  prices for period $\tau$ are chosen  at time $\tau$. The causal graph in Figure \ref{fig timing1} illustrates key features of such a model.\footnote{The figure includes an edge from $X^{\tau}$ to $\Xi^{\tau}$. The presence (or  direction) of such an edge is not important to the argument here.} For clarity, here we include the prices $P^{\tau}$ in the graph.

\begin{figure}[h!]\caption{}
\centering
\begin{tikzpicture}\label{fig timing1}
    \node[state] (xi) at (0,0) {$\Xi^{\tau-1}$};
    \node[state] (xit1) [below =of xi] {$\Xi^{\tau}$};

    
    
    \node[state] (mt0) [right =of xi] {$M^{\tau-1}$};
    \node[state] (x) [below =of mt0] {$X^{\tau}$};
    \node[state, draw=none] (blank2) [right =of mt0] {};
    \node[state] (wt1) [right =of mt0] {$W^{\tau}$};
    \node[state] (mt1) [below =of wt1] {$M^{\tau}$};
    \node[state] (p) [below =of x] {$P^{\tau}$};


    \path (xi) edge (x);
    \path (mt0) edge (x);
    \path (mt0) edge (mt1);
    \path (wt1) edge (mt1) ;

    \path (x) edge (xit1) ; 
    \path (x) edge (p) ; 
    \path (mt1) edge (p) ;
    \path (xit1) edge (p);
    \path (xi) edge (xit1);
    
\end{tikzpicture}
\end{figure}

Here it is clear that neither the contemporaneous cost shifters $M^{\tau}$ nor the lagged cost shifters $M^{\tau-1}$ can serve as  instruments for prices conditional on $X^{\tau}$: $X^{\tau}$ would be a collider, exactly as in the previous section. However, the period-$\tau$ innovation $W^{\tau}$ can do the job.  Because  $W^{\tau}$ alters period-$\tau$ marginal cost, it is relevant for the determination of $P^{\tau}$, conditional on $X^{\tau}$. And, by the d-separation criterion, we see than $W^{\tau}$ is independent of $\Xi^{\tau}$ conditional on $X^{\tau}$.  Indeed, this becomes particularly transparent if we again focus on the ancestral graph relevant for assessing this conditional independence, as in Figure \ref{fig timing2}. There we see that  $W^{\tau}$ is an example of a ``fully exogenous instrument,'' as discussed in section \ref{sec fully exog}. Ultimately, the innovation $W^{\tau}$ is simply a cost shifter that is independent of all else.  The important insight, however, is that natural timing assumptions can allow such fully independent cost shifters to be constructed from measures like input prices that themselves are not independent of $\Xi^{\tau}$ conditional on $X^{\tau}$.

\begin{figure}[h!]\caption{}
\centering
\begin{tikzpicture}\label{fig timing2}
    \node[state] (xi) at (0,0) {$\Xi^{\tau-1}$};
    \node[state] (xit1) [below =of xi] {$\Xi^{\tau}$};

    
    
    \node[state] (mt0) [right =of xi] {$M^{\tau-1}$};
    \node[state] (x) [below =of mt0] {$X^{\tau}$};
    \node[state, draw=none] (blank2) [right =of mt0] {};
    \node[state] (wt1) [right =of mt0] {$W^{\tau}$};


    \path (xi) edge (x);
    \path (mt0) edge (x);

    \path (x) edge (xit1) ; 
    \path (xi) edge (xit1);
  
\end{tikzpicture}
\end{figure}

 Similar arguments can allow construction of valid instruments from observed markup shifters (e.g.,  market-level demographics) whose lagged values affect firms' choices of $X$. Indeed, one may simply reinterpret $M^{\tau}$ above as a time-$\tau$ markup shifter.

\newpage
 \bibliographystyle{ecta}
\bibliography{berryhaile}

\begin{thebibliography}{65}
\newcommand{\enquote}[1]{``#1''}
\expandafter\ifx\csname natexlab\endcsname\relax\def\natexlab#1{#1}\fi

\bibitem[\protect\citeauthoryear{Ackerberg}{Ackerberg}{2003}]{Ackerberg03}
\textsc{Ackerberg, D.} (2003): \enquote{Advertising Learning and Customer
  Choice in Experience Good Markets: A Structural Empirical Examination,}
  \emph{International Economic Review}, 44, 1007--1040.

\bibitem[\protect\citeauthoryear{Ackerberg, Caves, and Frazer}{Ackerberg
  et~al.}{2006}]{ACF2006}
\textsc{Ackerberg, D.~A., K.~Caves, and G.~Frazer} (2006): \enquote{Structural
  Identification of Production Functions,} working paper, Toronto.

\bibitem[\protect\citeauthoryear{Adao, Costinot, and Donaldson}{Adao
  et~al.}{2017}]{AdaoCostDonaldson17}
\textsc{Adao, R., A.~Costinot, and D.~Donaldson} (2017): \enquote{Nonparametric
  Counterfactual Predictions in Neoclassical Models of International Trade,}
  \emph{American Economic Review}, 107, 633--89.

\bibitem[\protect\citeauthoryear{Anderson, de~Palma, and Kreider}{Anderson
  et~al.}{2001}]{Anderson-Depalma-Kreider}
\textsc{Anderson, S.~P., A.~de~Palma, and B.~Kreider} (2001): \enquote{Tax
  Incidence in Differentiated Product Oligopoly,} \emph{Journal of Public
  Economics}, 81, 173--192.

\bibitem[\protect\citeauthoryear{Backus, Conlon, and Sinkinson}{Backus
  et~al.}{2021}]{Backus-Conlon-Sinkinson}
\textsc{Backus, M., C.~Conlon, and M.~Sinkinson} (2021): \enquote{Common
  Ownership and Competition in the Ready-To-Eat Cereal Industry,} working
  paper, NYU-Stern.

\bibitem[\protect\citeauthoryear{Barseghyan, Coughlin, Molinari, and
  Teitelbaum}{Barseghyan et~al.}{2021}]{BCMT-consideration}
\textsc{Barseghyan, L., M.~Coughlin, F.~Molinari, and J.~C. Teitelbaum} (2021):
  \enquote{Heterogeneous Choice Sets and Preferences,} \emph{Econometrica}, 89,
  2015--2048.

\bibitem[\protect\citeauthoryear{Bayer, Ferreira, and McMillan}{Bayer
  et~al.}{2007}]{Bayer_et_al07}
\textsc{Bayer, P., F.~Ferreira, and R.~McMillan} (2007): \enquote{A Unified
  Framework for Measuring Preferences for Schools and Neighborhoods,}
  \emph{Journal of Political Economy}, 115, 588--638.

\bibitem[\protect\citeauthoryear{Bayer, Keohane, and Timmins}{Bayer
  et~al.}{2009}]{Bayer_Keohane_Timmons}
\textsc{Bayer, P., N.~Keohane, and C.~Timmins} (2009): \enquote{Migration and
  Hedonic valuation: The Case of Air Quality,} \emph{Journal of Environmental
  Economics and Management}, 58, 1 -- 14.

\bibitem[\protect\citeauthoryear{Benkard and Berry}{Benkard and
  Berry}{2006}]{Benkard-Berry}
\textsc{Benkard, L. and S.~T. Berry} (2006): \enquote{On the Nonparametric
  Identification of Nonlinear Simultaneous Equations Models: Comment on Brown
  (1983) and Roehrig (1988),} \emph{Econometrica}, 74, 1429--1440.

\bibitem[\protect\citeauthoryear{Berry}{Berry}{1994}]{Berry94}
\textsc{Berry, S.} (1994): \enquote{Estimating Discrete Choice Models of
  Product Differentiation,} \emph{RAND Journal of Economics}, 23, 242--262.

\bibitem[\protect\citeauthoryear{Berry, Levinsohn, and Pakes}{Berry
  et~al.}{1995}]{BLP}
\textsc{Berry, S., J.~Levinsohn, and A.~Pakes} (1995): \enquote{Automobile
  Prices in Market Equilibrium,} \emph{Econometrica}, 60, 889--917.

\bibitem[\protect\citeauthoryear{Berry, Levinsohn, and Pakes}{Berry
  et~al.}{2004}]{MicroBLP}
---\hspace{-.1pt}---\hspace{-.1pt}--- (2004): \enquote{Differentiated Products
  Demand Systems from a Combination of Micro and Macro Data: The New Vehicle
  Market,} \emph{Journal of Political Economy}, 112, 68--105.

\bibitem[\protect\citeauthoryear{Berry, Gandhi, and Haile}{Berry
  et~al.}{2013}]{BerryGandhiHaile}
\textsc{Berry, S.~T., A.~Gandhi, and P.~A. Haile} (2013): \enquote{Connected
  Substitutes and Invertibility of Demand,} \emph{Econometrica}, 81,
  2087--2111.

\bibitem[\protect\citeauthoryear{Berry and Haile}{Berry and
  Haile}{2010}]{BerryHaileNPold}
\textsc{Berry, S.~T. and P.~A. Haile} (2010): \enquote{Nonparametric
  Identification of Multinomial Choice Demand Models with Heterogeneous
  Consumers,} Discussion Paper 1718, Cowles Foundation, Yale University.

\bibitem[\protect\citeauthoryear{Berry and Haile}{Berry and
  Haile}{2014}]{Berry-Haile-market}
---\hspace{-.1pt}---\hspace{-.1pt}--- (2014): \enquote{Identification in
  Differentiated Products Markets Using Market Level Data,}
  \emph{Econometrica}, 82, 1749--1797.

\bibitem[\protect\citeauthoryear{Berry and Haile}{Berry and
  Haile}{2016}]{Berry-Haile-annreview}
---\hspace{-.1pt}---\hspace{-.1pt}--- (2016): \enquote{Identification in
  Differentiated Products Markets,} \emph{Annual Review of Economics}, 8,
  27--52.

\bibitem[\protect\citeauthoryear{Berry and Haile}{Berry and
  Haile}{2018}]{Berry-Haile-simulteqn}
---\hspace{-.1pt}---\hspace{-.1pt}--- (2018): \enquote{Nonparametric
  Identification of Simultaneous Equations Models with a Residual Index
  Structure,} \emph{Econometrica}, 86, 289--315.

\bibitem[\protect\citeauthoryear{Berry and Haile}{Berry and
  Haile}{2021}]{Berry-Haile-HBK}
---\hspace{-.1pt}---\hspace{-.1pt}--- (2021): \enquote{Foundations of Demand
  Estimation,} in \emph{Handbook of Industrial Organization}, ed. by K.~Ho,
  A.~Horta\c{c}su, and A.~Lizzeri, Elsevier.

\bibitem[\protect\citeauthoryear{Bhattacharya}{Bhattacharya}{2018}]{Bhattacharya-WF2}
\textsc{Bhattacharya, D.} (2018): \enquote{Empirical Welfare Analysis for
  Discrete Choice: Some General Results,} \emph{Quantitative Economics}, 9,
  571--615.

\bibitem[\protect\citeauthoryear{Block and Marschak}{Block and
  Marschak}{1960}]{BlockMarschak}
\textsc{Block, H. and J.~Marschak} (1960): \enquote{Random Orderings and
  Stochastic Theories of Responses,} in \emph{Contributions to Probability and
  Statistics: Essays in Honor of Harold Hotelling}, ed. by I.~Olkin, S.~Ghurye,
  W.~Hoeffding, W.~G. Mado, and H.~B. Mann, Stanford University Press, 97--132.

\bibitem[\protect\citeauthoryear{Blundell, Kristensen, and Matzkin}{Blundell
  et~al.}{2013}]{Blundell-Kristensen-MatzkinP&P}
\textsc{Blundell, R., D.~Kristensen, and R.~L. Matzkin} (2013):
  \enquote{Control Functions and Simultaneous Equations Methods,}
  \emph{American Economic Review, Papers and Proceedings}, 103, 563--569.

\bibitem[\protect\citeauthoryear{Blundell, Kristensen, and Matzkin}{Blundell
  et~al.}{2020}]{Blundell-Kristensen-Matzkin-individual}
---\hspace{-.1pt}---\hspace{-.1pt}--- (2020): \enquote{Individual
  Counterfactuals with Multidimensional Unobserved Heterogeneity,} Tech. rep.,
  CEMMaP Working Paper CWP60/17.

\bibitem[\protect\citeauthoryear{Blundell and Matzkin}{Blundell and
  Matzkin}{2014}]{BlundellMatzkin14}
\textsc{Blundell, R. and R.~L. Matzkin} (2014): \enquote{Control Functions in
  Nonseparable Simultaneous Equations Models,} \emph{Quantitative Economics},
  5, 271--295.

\bibitem[\protect\citeauthoryear{Burda, Harding, and Hausman}{Burda
  et~al.}{2015}]{BurdaHardingHausman}
\textsc{Burda, M., M.~Harding, and J.~Hausman} (2015): \enquote{A Bayesian
  Mixed Logit-Probit Model for Multinomial Choice,} \emph{Journal of Applied
  Econometrics}, 30, 353--376.

\bibitem[\protect\citeauthoryear{Capps, Dranove, and Satterthwaite}{Capps
  et~al.}{2003}]{CappsDranoveSatterthwaite}
\textsc{Capps, C., D.~Dranove, and M.~Satterthwaite} (2003):
  \enquote{Competition and Market Power in Option Demand Markets,} \emph{RAND
  Journal of Economics}, 34, 737--763.

\bibitem[\protect\citeauthoryear{Conlon and Mortimer}{Conlon and
  Mortimer}{2021}]{Conlon-Mortimer-diversion}
\textsc{Conlon, C. and J.~H. Mortimer} (2021): \enquote{Empirical Properties of
  Diversion Ratios,} \emph{RAND Journal of Economics}.

\bibitem[\protect\citeauthoryear{DeCarolis, Polyakova, and Ryan}{DeCarolis
  et~al.}{2020}]{Decarolis-etal-subsidy}
\textsc{DeCarolis, F., M.~Polyakova, and S.~P. Ryan} (2020): \enquote{Subsidy
  Design in Privately Provided Social Insurance: Lessons from Medicare Part D,}
  \emph{Journal of Political Economy}, 128, 1712--1752.

\bibitem[\protect\citeauthoryear{Diamond}{Diamond}{2016}]{Diamond_JMP}
\textsc{Diamond, R.} (2016): \enquote{The Determinants and Welfare Implications
  of US Workers' Diverging Location Choices by Skill: 1980--2000,} \emph{The
  American Economic Review}, 106, 479--524.

\bibitem[\protect\citeauthoryear{Duarte, Magnolfi, S{\o}lvsten, and
  Sullivan}{Duarte et~al.}{2021}]{Duarte-etal-testing}
\textsc{Duarte, M., L.~Magnolfi, M.~S{\o}lvsten, and C.~Sullivan} (2021):
  \enquote{Testing Firm Conduct,} working paper, UW-Madision.

\bibitem[\protect\citeauthoryear{Falmagne}{Falmagne}{1978}]{Falmagne78}
\textsc{Falmagne, J.-C.} (1978): \enquote{A Representation Theorem for Finite
  Random Scale Systems,} \emph{Journal of Mathematical Psychology}, 18, 52--72.

\bibitem[\protect\citeauthoryear{Fan}{Fan}{2013}]{Fan}
\textsc{Fan, Y.} (2013): \enquote{Ownership Consolidation and Product
  Characteristics: A Study of the U.S. Daily Newspaper Market,} \emph{American
  Economic Review}, 103, 1598--1628.

\bibitem[\protect\citeauthoryear{Gale and Nikaido}{Gale and
  Nikaido}{1965}]{Gale-Nikaido}
\textsc{Gale, D. and H.~Nikaido} (1965): \enquote{The Jacobian Matrix and
  Global Univalence of Mappings,} \emph{Mathematische Annalen}, 159, 81--93.

\bibitem[\protect\citeauthoryear{Gentzkow and Shapiro}{Gentzkow and
  Shapiro}{2010}]{GentzkowShapiro}
\textsc{Gentzkow, M. and J.~Shapiro} (2010): \enquote{What Drives Media Slant?
  Evidence from U.S. Newspapers,} \emph{Econometrica}, 78, 35--71.

\bibitem[\protect\citeauthoryear{Gerber}{Gerber}{1998}]{Gerber98}
\textsc{Gerber, A.} (1998): \enquote{Estimating the Effect of Campaign Spending
  on Senate Election Outcomes Using Instrumental Variables,} \emph{The American
  Political Science Review}, 92, 401--411.

\bibitem[\protect\citeauthoryear{Goldberg}{Goldberg}{1995}]{Goldberg95}
\textsc{Goldberg, P.~K.} (1995): \enquote{Product Differentiation and Oligopoly
  in International Markets: The Case of the U.S. Automobile Industry,}
  \emph{Econometrica}, 63, 891--951.

\bibitem[\protect\citeauthoryear{Gordon and Hartmann}{Gordon and
  Hartmann}{2013}]{Gordon_voting}
\textsc{Gordon, B. and W.~R. Hartmann} (2013): \enquote{Advertising Effects in
  Presidential Elections,} \emph{Marketing Science}, 32, 19--35.

\bibitem[\protect\citeauthoryear{Hausman, Leonard, and Zona}{Hausman
  et~al.}{1994}]{Hausman-Leonard-Zona}
\textsc{Hausman, J., G.~Leonard, and J.~Zona} (1994): \enquote{Competitive
  Analysis with Differentiated Products,} \emph{Annales d'Economie et de
  Statistique}, 34, 159–--180.

\bibitem[\protect\citeauthoryear{Hausman}{Hausman}{1996}]{Hausman96}
\textsc{Hausman, J.~A.} (1996): \enquote{Valuation of New Goods under Perfect
  and Imperfect Competition,} in \emph{The Economics of New Goods}, ed. by
  T.~F. Bresnahan and R.~J. Gordon, Chicago: University of Chicago Press,
  chap.~5, 209--248.

\bibitem[\protect\citeauthoryear{Ho}{Ho}{2006}]{Ho_demand}
\textsc{Ho, K.} (2006): \enquote{The Welfare Effects of Restricted Hospital
  Choice in the U.S. Medical Care Market,} \emph{Journal of Applied
  Econometrics}, 21, 1039--1079.

\bibitem[\protect\citeauthoryear{Ho}{Ho}{2009}]{Ho_jobmkt}
---\hspace{-.1pt}---\hspace{-.1pt}--- (2009): \enquote{Insurer-Provider
  Networks in the Medical Care Market,} \emph{American Economic Review}, 99,
  393--430.

\bibitem[\protect\citeauthoryear{Ho and Lee}{Ho and Lee}{2016}]{Ho_Lee_Insurer}
\textsc{Ho, K. and R.~Lee} (2016): \enquote{Insurer Competition in Health Care
  Markets,} \emph{Econometrica}, forthcoming.

\bibitem[\protect\citeauthoryear{Hom}{Hom}{2018}]{HomJMP}
\textsc{Hom, M.} (2018): \enquote{School Choice, Segregation and Access to
  Quality Schools: Evidence from Arizona,} Tech. rep., Yale University.

\bibitem[\protect\citeauthoryear{Lewbel}{Lewbel}{2000}]{Lewbel00}
\textsc{Lewbel, A.} (2000): \enquote{Semiparametric Qualitative Response Model
  Estimation with Unknown Heteroscedasticity or Instrumental Variables,}
  \emph{Journal of Econometrics}, 97, 145--177.

\bibitem[\protect\citeauthoryear{Lewbel}{Lewbel}{2014}]{Lewbel-special}
---\hspace{-.1pt}---\hspace{-.1pt}--- (2014): \enquote{An Overview of the
  Special Regressor Method,} in \emph{The Oxford Handbook of Applied
  Nonparametric and Semiparametric Econometrics and Statistics}, ed. by J.~S.
  Racine, L.~Su, and A.~Ullah, Oxford University Press, 38--62.

\bibitem[\protect\citeauthoryear{Li, Hartmann, and Amano}{Li
  et~al.}{2020}]{LiHartmannAmano}
\textsc{Li, X., W.~Hartmann, and T.~Amano} (2020): \enquote{Preference
  Externality Estimators: A Comparison of Border Approaches and IVs,} Working
  paper, Stanford University.

\bibitem[\protect\citeauthoryear{Mas-Colell, Whinston, and Green}{Mas-Colell
  et~al.}{1995}]{Mas-Colell_Whinston_Green}
\textsc{Mas-Colell, A., M.~D. Whinston, and J.~R. Green} (1995):
  \emph{Microeconomic Theory}, Oxford University Press.

\bibitem[\protect\citeauthoryear{Matzkin}{Matzkin}{2008}]{Matzkin_simult_ema}
\textsc{Matzkin, R.~L.} (2008): \enquote{Identification in Nonparametric
  Simultaneous Equations,} \emph{Econometrica}, 76, 945--978.

\bibitem[\protect\citeauthoryear{Matzkin}{Matzkin}{2015}]{Matzkin_simult_estimate}
---\hspace{-.1pt}---\hspace{-.1pt}--- (2015): \enquote{Estimation of
  Nonparametric Models with Simultaneity,} \emph{Econometrica}, 83, 1--66.

\bibitem[\protect\citeauthoryear{McFadden}{McFadden}{2005}]{McFadden-revealed}
\textsc{McFadden, D.} (2005): \enquote{Revealed Stochastic Preference: A
  Synthesis,} \emph{Economic Theory}, 26, 245--264.

\bibitem[\protect\citeauthoryear{McFadden, Talvitie, and Associates}{McFadden
  et~al.}{1977}]{McFadden-Trans}
\textsc{McFadden, D., A.~Talvitie, and Associates} (1977): \emph{Demand Model
  Estimation and Validation}, Berkeley CA: Institute of Transportation Studies.

\bibitem[\protect\citeauthoryear{Miller and Sheu}{Miller and
  Sheu}{2021}]{MillerSheuMerger}
\textsc{Miller, N. and G.~Sheu} (2021): \enquote{Quantitative Methods for
  Evaluating the Unilateral Effects of Mergers,} \emph{Review of Industrial
  Organization}, 58, 143--177.

\bibitem[\protect\citeauthoryear{Nakamura and Zerom}{Nakamura and
  Zerom}{2010}]{Nakamura-Zerom}
\textsc{Nakamura, E. and D.~Zerom} (2010): \enquote{Accounting for Incomplete
  Pass-Through,} \emph{Review of Economic Studies}, 77, 1192--1230.

\bibitem[\protect\citeauthoryear{Neilson}{Neilson}{2021}]{neilson19}
\textsc{Neilson, C.} (2021): \enquote{Targeted Vouchers, Competition Among
  Schools, and the Academic Achievement of Poor Students,} Working paper,
  Princeton University.

\bibitem[\protect\citeauthoryear{Nevo}{Nevo}{2000{\natexlab{a}}}]{Nevo2000}
\textsc{Nevo, A.} (2000{\natexlab{a}}): \enquote{Mergers with Differentiated
  Products: The Case of the Ready-to-Eat Cereal Industry,} \emph{RAND Journal
  of Economics}, 31, 395--421.

\bibitem[\protect\citeauthoryear{Nevo}{Nevo}{2000{\natexlab{b}}}]{Nevo2000b}
---\hspace{-.1pt}---\hspace{-.1pt}--- (2000{\natexlab{b}}): \enquote{A
  Practitioner's Guide to Estimation of Random Coefficients Logit Models of
  Demand,} \emph{Journal of Economics \& Management Strategy}, 9, 513--548.

\bibitem[\protect\citeauthoryear{Nevo}{Nevo}{2001}]{Nevo2001}
---\hspace{-.1pt}---\hspace{-.1pt}--- (2001): \enquote{Measuring Market Power
  in the Ready-to-Eat Cereal Industry,} \emph{Econometrica}, 69, 307--42.

\bibitem[\protect\citeauthoryear{Newey and Powell}{Newey and
  Powell}{2003}]{NeweyPowell2003}
\textsc{Newey, W.~K. and J.~L. Powell} (2003): \enquote{Instrumental Variable
  Estimation in Nonparametric Models,} \emph{Econometrica}, 71, 1565--1578.

\bibitem[\protect\citeauthoryear{Olley and Pakes}{Olley and
  Pakes}{1996}]{OlleyPakes96}
\textsc{Olley, S.~G. and A.~Pakes} (1996): \enquote{The Dynamics of
  Productivity in the Telecommunications Equipment Industry,}
  \emph{Econometrica}, 64, 1263--97.

\bibitem[\protect\citeauthoryear{Palais}{Palais}{1959}]{Palais59}
\textsc{Palais, R.~S.} (1959): \enquote{Natural Operations on Differential
  Forms,} \emph{Transactions of the American Mathematical Society}, 92,
  125--141.

\bibitem[\protect\citeauthoryear{Parthasarathy}{Parthasarathy}{1983}]{Parthasarathy83}
\textsc{Parthasarathy, T.} (1983): \emph{On Global Univalence Theorems},
  Springer.

\bibitem[\protect\citeauthoryear{Pearl}{Pearl}{2009}]{Pearl-Causality}
\textsc{Pearl, J.} (2009): \emph{Causality}, Cambridge University Press, second
  ed.

\bibitem[\protect\citeauthoryear{Pearl, Glymour, and Jewell}{Pearl
  et~al.}{2016}]{Pearl_etal_Primer}
\textsc{Pearl, J., M.~Glymour, and N.~Jewell} (2016): \emph{Causal Inference in
  Statistics: A Primer}, Wiley.

\bibitem[\protect\citeauthoryear{Petrin}{Petrin}{2002}]{Petrin02}
\textsc{Petrin, A.} (2002): \enquote{Quantifying the Benefits of New Products:
  The Case of the Minivan,} \emph{JPE}, 110, 705--729.

\bibitem[\protect\citeauthoryear{van Mill}{van Mill}{2002}]{van-Mill-topology}
\textsc{van Mill, J.} (2002): \emph{The Infinite-Dimensional Topology of
  Function Spaces}, Elsevier.

\bibitem[\protect\citeauthoryear{Waldfogel}{Waldfogel}{2003}]{Waldfogel_whom}
\textsc{Waldfogel, J.} (2003): \enquote{Preference Externalities: An Empirical
  Study of Who Benefits Whom in Differentiated-Product Markets,} \emph{RAND
  Journal of Economics}, 34, 557--568.

\end{thebibliography}

\end{document}